\documentclass[letterpaper, 10 pt, conference]{ieeeconf}  

\IEEEoverridecommandlockouts                              
\usepackage{graphics} 
\usepackage{epsfig} 
\usepackage{amsmath} 
\usepackage{amssymb}  
\usepackage{xcolor}
\usepackage{graphicx}

\newtheorem{theorem}{Theorem}
\newtheorem{lemma}{Lemma}
\newtheorem{definition}{Definition}

\newtheorem{corollary}{Corollary}
\newtheorem{remark}{Remark}

\title{\LARGE \bf
	Macroscopic Motion Patterns from Generalized Velocity Rigidity in Multi-Agent Networks
}

\author{Ronghai He and Changhuang Wan% 
	\thanks{The authors are with the Department of Mechanical Engineering, City University of Hong Kong, Hong Kong, China.
		{\tt\small ronghaihe2-c@my.cityu.edu.hk, changwan@cityu.edu.hk}}%
}

\begin{document}
	
	\maketitle
	\thispagestyle{empty}
	\pagestyle{empty}
	
	%%%%%%%%%%%%%%%%%%%%%%%%%%%%%%%%%%%%%%%%%%%%%%%%%%%%%%%%%%%%%%%%%%%%%%%%%%%%%%%%
	\begin{abstract}
		Traditional graph rigidity theory enforces structural constraints in the position space, effectively restricting a multi-agent formation to a static geometric shape. In this paper, we introduce a fundamental paradigm shift by applying rigidity constraints directly to the agents' continuous-time velocity space. We propose the concept of generalized velocity rigidity, demonstrating that the macroscopic physical motion patterns of a multi-agent network are entirely dictated by the underlying static graph topology. By strictly confining the network's acceleration profile to the null space of the velocity rigidity matrix, we mathematically map the trivial motions of this null space into exact physical trajectories. Specifically, we prove that translational, rotational, and scaling trivial motions seamlessly integrate into cohesive curved flocking, synchronized helical and circular $\mathrm{SO}(d)$ orbiting, and exponential spatial homothety, respectively. Furthermore, we analyze velocity-space singularities, showing that velocity consensus induces a valid topological contraction rather than a structural failure. Finally, we provide comprehensive numerical simulations to validate these theoretical mappings, demonstrating that complex macroscopic maneuvers can be orchestrated purely through algebraic constraints in the velocity space, unlocking unprecedented spatial shape flexibility for multi-agent swarms.
	\end{abstract}
	\section{Introduction}
	The coordinated control of multi-agent systems relies heavily on the structural properties of the underlying communication and sensing topologies \cite{anderson2018rigid, olfati2007consensus, oh2015survey}. Over the past decades, graph rigidity theory has emerged as a fundamental mathematical tool for distributed formation control \cite{oh2015survey, krick2009stabilisation, deQueiroz2019formation}, network localization \cite{aspnes2006theory, eren2004rigidity}, and target enclosure \cite{lan2011}. Classical rigidity frameworks—encompassing distance \cite{cortes2004coverage, krick2009stabilisation, sun2014finite}, bearing \cite{zelazo2019formation}, and angle constraints \cite{huang2025signed, chen2020angle}—operate predominantly in the position space. The primary objective of these conventional approaches is to enforce specific inter-agent geometric conditions to maintain a static, global formation shape up to a set of allowable, non-deforming trivial motions \cite{anderson2018rigid, deQueiroz2019formation}. For instance, distance rigidity rigidly preserves the exact scale of the formation \cite{krick2009stabilisation, cai2014rigidity}, whereas local bearing \cite{zelazo2019formation} and angle-based \cite{huang2025signed, chen2020angle} rigidity inherently permit uniform spatial scaling, providing a limited degree of size adaptability while preserving the fundamental geometric shape \cite{zelazo2019formation}.
	
	Despite its profound success, position-based rigidity inherently restricts a multi-agent network to act as a highly constrained structural body \cite{krick2009stabilisation, deQueiroz2019formation, cai2014rigidity}. In complex, highly dynamic, or obstacle-dense environments, locking the absolute relative positions limits the fluidity of the swarm \cite{olfati2006flocking, su2009connectivity}. To introduce shape flexibility, recent literature has intensively investigated affine formation maneuvers \cite{zhao2018affine, xu2020affine, lin2015necessary, ma2020event,  kang2024affine}. Affine rigidity allows a network to undergo global affine transformations—such as shearing, non-uniform scaling, and rotational maneuvers—by manipulating a nominal spatial configuration \cite{zhao2018affine, xu2020affine}.
	
	In a parallel thread of research addressing dynamic environments, collective motion and flocking have been extensively studied, pioneering from Reynolds' foundational heuristic boids model \cite{reynolds1987flocks} to rigorous system-theoretic analyses of consensus-based cohesive flocking \cite{olfati2007consensus, olfati2006flocking, jadbabaie2003coordination,  cucker2007emergent}. Recent advances in flocking encompass adaptive potential functions \cite{li2024flocking}, data-driven barrier functions \cite{yan2025data}, fixed-time optimal control \cite{liu2025distributed}, predictive lattice formations \cite{hernandez2026predictive}, and cooperative-competitive dynamics \cite{wang2026asynchronous,ali2024state}. While these flocking models elegantly achieve velocity alignment and group cohesion, they predominantly rely on gradient-based artificial potential fields or nearest-neighbor averaging \cite{olfati2006flocking, su2009connectivity, li2024flocking}, rather than utilizing a strict algebraic graph structure to uniquely dictate the resulting macroscopic group geometry \cite{anderson2018rigid, krick2009stabilisation}. Consequently, both affine shape-shifting \cite{zhao2018affine, xu2020affine} and traditional/recent flocking protocols \cite{olfati2006flocking, cucker2007emergent, li2024flocking, ali2024state} lack a unified framework that directly maps static topological constraints to dynamic, macroscopic group trajectories.
	
	In this paper, we propose a novel theoretical framework: \textbf{Generalized Velocity Rigidity}. By migrating rigidity constraints from the position space to the velocity space, we redefine the structural integrity of the network. A framework is velocity-rigid if its allowable continuous-time infinitesimal motions—which physically correspond to the agents' acceleration profiles—are strictly confined to the null space of the velocity rigidity matrix. This single, elegant differential inclusion serves as a powerful bridge between algebraic graph theory and macroscopic physical kinematics, effectively unifying the structural guarantees of rigidity theory with the dynamic fluidity of flocking algorithms.
	
	The core contribution of this work is the rigorous mathematical demonstration that the dynamic macroscopic trajectories of a multi-agent system are completely and uniquely dictated by its static algebraic graph topology in the velocity space. Specifically, this paper makes the following key contributions:
	\begin{enumerate}
		\item \textbf{Geometric Configurations and Singularities}: We demonstrate that degenerate states (e.g., velocity consensus or collinearity) correspond to physically meaningful topological contractions and subspace confinements rather than physical failures.
		\item \textbf{Exact Kinematic Mappings}: We establish analytical proofs mapping the algebraic null space of the velocity rigidity matrix to physical trajectories. We demonstrate that translational, rotational, and scaling trivial motions in the velocity graph strictly map onto curved flocking, concentric $SO(d)$ Lie group orbiting, and exponential spatial dilation in the Euclidean position space, respectively.
		\item \textbf{Decoupling of Topology and Spatial Geometry}: Through rigorous theoretical analysis and comprehensive numerical case studies, we demonstrate a fundamental advantage of generalized velocity rigidity: it rigidly locks the macroscopic continuous dynamics while inherently permitting extreme spatial shape flexibility (e.g., continuous stretching and shearing). We explicitly identify the algebraic position-velocity coupling conditions required to enforce classical shape-preserving maneuvers (such as concentric orbiting and spatial homothety).
	\end{enumerate}
	
	The remainder of this paper is organized as follows. Section \ref{sec:prelim} introduces the mathematical preliminaries and formalizes generalized velocity rigidity. Section \ref{sec:KINEMATIC_MAPPING} provides the core theorems and proofs for the exact kinematic mappings from the null space to macroscopic patterns. Section \ref{sec:DEGENERATE} investigates the geometric configurations and singularity properties in the velocity space. Section \ref{sec:simulations} provides numerical simulations to validate the theoretical findings. Finally, Section \ref{sec:conclusion} concludes the paper and discusses future research directions.
	
	% =========================================================
	\section{PRELIMINARIES AND GENERALIZED VELOCITY RIGIDITY}\label{sec:prelim}
	% =========================================================
	Let $\mathbb{R}^d$ denote the $d$-dimensional Euclidean space ($d \ge 2$). The Kronecker product is denoted by $\otimes$, and $I_n$ represents the $n \times n$ identity matrix.
	
	\subsection{Algebraic Graph Theory and Velocity Frameworks}
	Consider a multi-agent system of $n$ agents modeled by a graph $\mathcal{G} = (\mathcal{V}, \mathcal{E})$, where $\mathcal{V} = \{1, \dots, n\}$ is the set of vertices and $\mathcal{E} \subseteq \mathcal{V} \times \mathcal{V}$ is the set of $m$ edges representing measurement interactions. Unlike classical rigidity theory, which assigns a position vector $p_i \in \mathbb{R}^d$ to each vertex {to represent relative distance/bearing/angle constraints}, we project the network into the continuous-time velocity space. Let $v_i(t) \in \mathbb{R}^d$ and $a_i(t) \in \mathbb{R}^d$ denote the velocity and acceleration of agent $i$, respectively. The stacked velocity vector is $v(t) = [v_1(t)^T, \dots, v_n(t)^T]^T \in \mathbb{R}^{dn}$, and the acceleration vector $a(t) = [a_1(t)^T, \dots, a_n(t)^T]^T \in \mathbb{R}^{dn}$. 
	
	\begin{definition}[\textbf{Velocity Framework}]
		A velocity framework, denoted as $(\mathcal{G}, v)$, is a pair consisting of a graph $\mathcal{G}$ and a specific velocity realization $v \in \mathbb{R}^{dn}$ assigned to its vertices.
	\end{definition}
	
	\begin{definition}[\textbf{Velocity Generic Configuration}]
		A velocity framework $(\mathcal{G}, v)$ is generic if the coordinates of $v \in \mathbb{R}^{dn}$ are algebraically independent over the field of rational numbers. For a generic framework, the structural matrices achieve their maximal possible rank.
	\end{definition}
	
	\subsection{The Generalized Velocity Rigidity Matrix}
	For any edge $k = (i,j) \in \mathcal{E}$, let $r_k(v_i, v_j)$ represent a measurement constraint acting on the relative velocity of connected agents. This accommodates various modalities; for example, distance constraints yield $r_k = \|v_i - v_j\|^2$, while bearing constraints yield $r_k = \frac{v_i - v_j}{\|v_i - v_j\|}$. Stacking the functions yields the generalized velocity rigidity function, $r_{\mathcal{G}}(v) : \mathbb{R}^{dn} \to \mathbb{R}^{m_c}$. 
	
	\begin{definition}[\textbf{Velocity Rigidity Matrix}]
		The generalized velocity rigidity matrix, $R_v(v) \in \mathbb{R}^{m_c \times dn}$, is the partial derivative of the generalized velocity rigidity function with respect to the stacked velocity vector:
		\begin{equation}
			R_v(v) = \frac{\partial r_{\mathcal{G}}(v)}{\partial v}.
		\end{equation}
	\end{definition}
	
	\subsection{Infinitesimal Rigidity and Continuous-Time Kinematics}
	An infinitesimal perturbation $\delta v \in \mathbb{R}^{dn}$ preserves the velocity constraints to a first-order approximation if $R_v(v)\delta v = \mathbf{0}$. Any graph subjected to relative constraints inherently allows a set of unconstrained, non-deforming aggregate movements, defined as the space of trivial motions $\mathcal{S}_{trivial}$.
	
	\begin{definition}[\textbf{Infinitesimal Velocity Rigidity}]
		A framework $(\mathcal{G}, v)$ is infinitesimally velocity rigid if every valid infinitesimal perturbation exclusively corresponds to a trivial motion, requiring:
		\begin{equation}
			\text{Null}(R_v(v)) = \mathcal{S}_{trivial}.
		\end{equation}
	\end{definition}
	
	The critical departure from classical statics lies in continuous-time kinematics. By differentiating the velocity, we obtain the acceleration profile $a(t) = \dot{v}(t) \in \mathbb{R}^{dn}$.
	
	\begin{remark}[\textbf{Structural Isomorphism}]
		It is crucial to note that the generalized velocity rigidity matrix $R_v(v)$ exhibits a profound structural isomorphism with the classical position-based rigidity matrix $R(p)$. For instance, under distance-based constraints, the Jacobian blocks $\frac{\partial \|v_i-v_j\|^2}{\partial v_i} = 2(v_i-v_j)^T$ mathematically mirror their positional counterparts. Consequently, the well-established topological conditions for minimal infinitesimal rigidity (e.g., the Laman condition requiring $m = 2n-3$ edges in 2D \cite{zhao2017laman}) are directly inheritable to the velocity space, provided the generic configuration assumption holds.
	\end{remark}
	
	\begin{lemma} \label{lem-diff-inclusion}[\textbf{Kinematic Differential Inclusion}]
		For a network to maintain infinitesimal velocity rigidity continuously, its global acceleration vector must be strictly confined to the instantaneous null space:
		\begin{equation}
			\label{eq:diff_inclusion}
			a(t) \in \text{Null}(R_v(v(t))).
		\end{equation}
	\end{lemma}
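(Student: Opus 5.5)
The plan is to interpret ``maintaining infinitesimal velocity rigidity continuously'' as the requirement that the measurement vector $r_{\mathcal{G}}(v(t))$ stays constant along the trajectory. In other words, the velocity framework $(\mathcal{G},v(t))$ is transported only by constraint-preserving deformations, so the instantaneous infinitesimal perturbation generated by the dynamics is admissible at every time. I will assume that $v(t)$ is continuously differentiable, so that $a(t)=\dot v(t)$ exists. I will also assume that each $r_k$ is smooth on the relevant domain; for bearing constraints this means $v_i\neq v_j$ for every edge, which I will state as a standing hypothesis and defer to the later singularity analysis.

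The first step is to differentiate the invariance condition $r_{\mathcal{G}}(v(t))\equiv r_{\mathcal{G}}(v(t_0))$ with respect to time. By the chain rule and the Velocity Rigidity Matrix definition,
\begin{equation*}
\frac{d}{dt} r_{\mathcal{G}}(v(t)) = \frac{\partial r_{\mathcal{G}}}{\partial v}\Big|_{v(t)} \dot v(t) = R_v(v(t))\, a(t) = \mathbf{0},
\end{equation*}
which is exactly $a(t)\in\text{Null}(R_v(v(t)))$. Equivalently, in the language of the infinitesimal rigidity subsection, the increment over a short interval is $\delta v = a(t)\,dt + o(dt)$. First-order preservation of the constraints therefore forces $R_v(v(t))\,\delta v=\mathbf{0}$, and hence $R_v(v(t))\,a(t)=\mathbf{0}$ after dividing by $dt$ and letting $dt\to 0$.

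The second step connects this to the Infinitesimal Velocity Rigidity definition. Since the framework is infinitesimally velocity rigid at each $t$, we have $\text{Null}(R_v(v(t)))=\mathcal{S}_{trivial}(v(t))$. The inclusion then reads $a(t)\in\mathcal{S}_{trivial}(v(t))$, meaning the acceleration is always an instantaneous trivial motion of the velocity framework. This is the form the subsequent kinematic mapping section will exploit. For necessity, I argue by contradiction: if $R_v(v(t^*))\,a(t^*)\neq\mathbf{0}$ at some time $t^*$, then some component $r_k$ has nonzero derivative at $t^*$. That constraint is therefore strictly changing near $t^*$, which is a genuine non-trivial deformation of the velocity framework and contradicts maintained rigidity.

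The main obstacle is definitional rather than computational. ``Maintain infinitesimal velocity rigidity continuously'' has to be pinned down as constancy of $r_{\mathcal{G}}$, or at least first-order preservation at every instant; otherwise the statement is not well-posed. I would make this interpretation explicit at the start of the proof. A secondary care point is differentiability at degenerate configurations, where $R_v$ is undefined (bearing constraints with $v_i=v_j$) or drops rank. I would restrict the claim to time intervals on which $v(t)$ stays in the regular, generic region, and note that the degenerate cases are treated in Section~\ref{sec:DEGENERATE}.
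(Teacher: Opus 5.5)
Your argument is correct and follows the paper's own reasoning. The paper gives no formal proof: it rests only on the preceding remark that a perturbation $\delta v$ preserves the constraints to first order if $R_v(v)\delta v=\mathbf{0}$, together with $a(t)=\dot v(t)$, and your chain-rule identity $\frac{d}{dt}r_{\mathcal{G}}(v(t))=R_v(v(t))\,a(t)$ makes that step rigorous. Your reading of ``maintaining rigidity'' as constancy of $r_{\mathcal{G}}(v(t))$ is the one the paper uses from Theorem~\ref{thm:Translational} onward; mere persistence of the rank condition, being generically an open condition on $v$, would not force the inclusion.
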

	
	This differential inclusion forms the theoretical bedrock of this paper. By mapping the basis vectors of $\text{Null}(R_v)$ through integration, we mathematically deduce the macroscopic spatial trajectories a velocity-rigid swarm is permitted to execute.

	%=======================================================
	\section{KINEMATIC MAPPING FROM NULL SPACE TO MACROSCOPIC MOTION PATTERNS}\label{sec:KINEMATIC_MAPPING}
	We establish the exact analytical mapping from the algebraic null space of $R_v(v)$ to the macroscopic physical trajectories $p(t)$. We assume the framework $(\mathcal{G}, v)$ remains infinitesimally velocity rigid: $a(t) \in \text{Null}(R_v(v(t)))$. Let $p_i(t), v_i(t), a_i(t) \in \mathbb{R}^d$ denote position, velocity, and acceleration, such that $\dot{p}_i(t) = v_i(t)$ and $\dot{v}_i(t) = a_i(t)$.
	
	\subsection{Translational Subspace to Curved Flocking}
	\begin{lemma}[\textbf{Translational Invariance}]
		For any generalized velocity rigidity matrix $R_v(v)$ derived strictly from relative measurements, the translational subspace $\mathcal{S}_T = \text{span}\{\mathbf{1}_n \otimes I_d\}$ is a guaranteed subspace of $\text{Null}(R_v(v))$.
	\end{lemma}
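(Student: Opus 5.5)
The plan is to make precise what ``derived strictly from relative measurements'' means and then differentiate along a translation. We formalize the hypothesis as follows. Each constraint $r_k$ associated with edge $k=(i,j)$ depends on $v_i,v_j$ only through the difference $v_i - v_j$, i.e.\ $r_k(v_i,v_j) = \phi_k(v_i - v_j)$ for some differentiable map $\phi_k$ defined on an open set of $\mathbb{R}^d$. For the bearing case this open set excludes $v_i = v_j$, which we assume away here and defer to the singularity analysis of Section~\ref{sec:DEGENERATE}. Equivalently, $r_{\mathcal{G}}(v + \mathbf{1}_n \otimes c) = r_{\mathcal{G}}(v)$ for every $c \in \mathbb{R}^d$, since adding the same vector $c$ to every agent's velocity leaves all differences $v_i - v_j$ unchanged.

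The first route is a direct Jacobian computation. By the chain rule, the row block of $R_v(v)$ for edge $k$ has only two nonzero $d$-column blocks: $\frac{\partial r_k}{\partial v_i} = J_k$ and $\frac{\partial r_k}{\partial v_j} = -J_k$, where $J_k = D\phi_k(v_i - v_j)$. Take any element of $\mathcal{S}_T$, written $\delta v = (\mathbf{1}_n \otimes I_d)c = \mathbf{1}_n \otimes c$, so that every block satisfies $\delta v_\ell = c$. The $k$-th block of $R_v(v)\delta v$ is then $J_k c - J_k c = 0$. Since this holds for every edge $k$, we get $R_v(v)(\mathbf{1}_n \otimes I_d) = 0$, which gives $\mathcal{S}_T \subseteq \text{Null}(R_v(v))$. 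We will illustrate the computation with the distance Jacobian $2(v_i - v_j)^T$ from the Structural Isomorphism remark. For the bearing case, the Jacobian is $J_k = \frac{1}{\|v_i - v_j\|}\bigl(I_d - g_k g_k^T\bigr)$ with $g_k = \frac{v_i - v_j}{\|v_i - v_j\|}$, and the same cancellation applies.

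As a cleaner, modality-independent alternative, we can differentiate the invariance identity $r_{\mathcal{G}}(v + (\mathbf{1}_n \otimes I_d)c) = r_{\mathcal{G}}(v)$ with respect to $c$ at $c = 0$. This yields $R_v(v)(\mathbf{1}_n \otimes I_d) = 0$ immediately. The argument holds at every $v$ in the domain, not only at generic configurations.

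The main obstacle is not computational. It is pinning down the hypothesis: some measurements are nominally ``relative'' but not translation-invariant in velocity, for example constraints mixing $p$ and $v$. We will therefore state the invariance assumption explicitly. We will also note the domain caveat for bearing constraints at velocity consensus, where the Jacobian is undefined.
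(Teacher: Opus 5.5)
Your proposal is correct and follows essentially the same argument as the paper: the paper notes that a uniform shift $\mathbf{1}_n \otimes \mathbf{c}$ leaves every difference $v_i - v_j$ unchanged, so the directional derivative of $r_{\mathcal{G}}$ along it vanishes, which is exactly your second route of differentiating the invariance identity. Your explicit chain-rule computation with the blocks $J_k$ and $-J_k$, and your caveat that the bearing Jacobian is undefined at $v_i = v_j$, are useful refinements that the paper leaves implicit, but they do not change the approach.
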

	\begin{proof}
		Relative measurements depend exclusively on velocity differences. Applying a uniform shift $\delta v = \mathbf{1}_n \otimes \mathbf{c}$ yields $(v_i + \mathbf{c}) - (v_j + \mathbf{c}) = v_i - v_j$. Since measurements are invariant, the directional derivative along $\delta v$ is zero, hence $R_v(v)(\mathbf{1}_n \otimes \mathbf{c}) = \mathbf{0}$. % \hfill $\blacksquare$
	\end{proof}
	
	\begin{theorem}[\textbf{Curved Flocking}]
		\label{thm:Translational}
		If a velocity-rigid framework's acceleration profile is purely confined to the translational null space ($a(t) \in \mathcal{S}_T$), the macroscopic motion is a cohesive curved flocking maneuver, and velocity rigidity constraints are preserved globally over time.
	\end{theorem}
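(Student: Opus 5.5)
The plan is to integrate the differential inclusion of Lemma~\ref{lem-diff-inclusion} explicitly under the hypothesis $a(t)\in\mathcal{S}_T$. Then I will read off the resulting position trajectories and check that the constraint function $r_{\mathcal{G}}$ is constant along them.

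\textbf{Step 1: Common acceleration.} Since $\mathcal{S}_T=\text{span}\{\mathbf{1}_n\otimes I_d\}$, the hypothesis means there is a (measurable, say piecewise continuous) signal $\mathbf{c}(t)\in\mathbb{R}^d$ with $a(t)=\mathbf{1}_n\otimes\mathbf{c}(t)$. Equivalently, $a_i(t)=\mathbf{c}(t)$ for every agent $i$. The translational invariance lemma guarantees that this choice is admissible, i.e.\ $R_v(v(t))a(t)=\mathbf{0}$ for every $t$, whatever the current $v(t)$.

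\textbf{Step 2: Integrate the velocities.} Integrating $\dot v_i=\mathbf{c}(t)$ gives $v_i(t)=v_i(0)+\mathbf{C}(t)$ with $\mathbf{C}(t)=\int_0^t\mathbf{c}(s)\,ds$. Hence $v_i(t)-v_j(t)=v_i(0)-v_j(0)$ for all $t$ and all pairs $(i,j)$. Because every $r_k$ depends only on relative velocities, $r_{\mathcal{G}}(v(t))=r_{\mathcal{G}}(v(0))$ holds exactly, not merely to first order. This yields the claimed global preservation of the velocity constraints.

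\textbf{Step 3: Preservation of rigidity.} The same identity gives $R_v(v(t))=R_v(v(0))$. The Jacobian blocks are functions of the differences $v_i-v_j$ alone: this is immediate for distance constraints, and for bearing constraints whenever $v_i\neq v_j$. So the rank, and hence infinitesimal velocity rigidity, is preserved for all time.

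\textbf{Step 4: Integrate the positions.} Integrating once more gives
\begin{equation*}
p_i(t)=p_i(0)+v_i(0)\,t+\mathbf{P}(t),\qquad \mathbf{P}(t)=\int_0^t\mathbf{C}(s)\,ds .
\end{equation*}
Every agent therefore follows the same curved path $\mathbf{P}(t)$, which is arbitrary and shaped by $\mathbf{c}(t)$, superimposed on its own constant-velocity drift. The relative displacements are $p_i(t)-p_j(t)=p_i(0)-p_j(0)+(v_i(0)-v_j(0))t$, which are affine in $t$ and independent of $\mathbf{c}$.

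I interpret ``cohesive curved flocking'' in this sense. The centroid and all agents share the common curvature term $\mathbf{P}(t)$, while the internal geometry evolves independently of the maneuver. If the velocities are initially in consensus, the shape is exactly frozen, which gives classical rigid flocking along an arbitrary curve.

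\textbf{Main obstacle.} The main obstacle is not computational. It is making ``cohesive curved flocking'' precise. I would state it as the decomposition above: a common curved motion that is identical for all agents, plus a maneuver-independent relative drift. I would remark that positional cohesion in the strict sense requires the additional coupling $v_i(0)=v_j(0)$, which anticipates the paper's later discussion of position--velocity coupling conditions.

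A minor technical point is the bearing case at $v_i=v_j$. This is handled because the relative velocities never change, so a framework that is nonsingular at $t=0$ stays nonsingular for all $t$.
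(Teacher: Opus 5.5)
Your proposal is correct and follows essentially the same route as the paper: write $a_i(t)=\mathbf{c}(t)$, integrate once to get constant relative velocities $v_i(t)-v_j(t)=v_i(0)-v_j(0)$, then integrate again to obtain $p_i(t)=p_i(0)+v_i(0)t$ plus a common double integral of $\mathbf{c}$. Your extra observations are accurate refinements that the paper leaves implicit: that $R_v(v(t))=R_v(v(0))$, so the null space is literally unchanged, and that relative positions drift affinely in $t$ unless the initial velocities coincide.
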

	\begin{proof}
		Since $a(t) \in \mathcal{S}_T$, the acceleration is identical for all agents: $a_i(t) = c(t)$. Integrating with respect to time yields:
		\begin{equation}
			v_i(t) = v_i(0) + \int_{0}^{t} c(\tau) \text{d}\tau.
		\end{equation}
		For any edge $(i,j) \in \mathcal{E}$, the relative velocity is $v_i(t) - v_j(t) = v_i(0) - v_j(0)$, proving that exact global velocity rigidity is maintained. Integrating the velocity yields the spatial trajectory:
		\begin{equation}
			p_i(t) = p_i(0) + v_i(0)t + \int_{0}^{t} \int_{0}^{\tau} {c}(s) \text{d}s \text{d}\tau.
		\end{equation}
		This demonstrates that all agents undergo a synchronized translation superimposed onto their initial linear trajectories, manifesting as a macroscopic curved flocking pattern. % \hfill $\blacksquare$
	\end{proof}
	
	\begin{figure}[htbp]
		\centering
		\includegraphics[width=\linewidth]{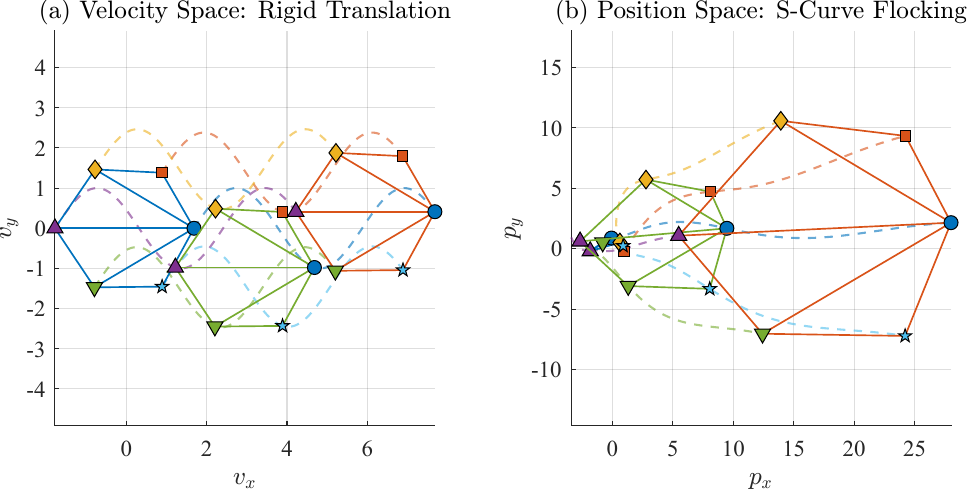}
		\vspace{-5mm}
		\caption{Kinematic mapping from $\mathcal{S}_T$ to curved flocking. (a) In the velocity space, the framework translates cohesively driven by a time-varying consensus acceleration ${c}(t)$. (b) The integration generates synchronized curved trajectories.}
		\label{fig:translation2}
		\vspace{-3mm}
	\end{figure}
	
	Fig. \ref{fig:translation2} visually demonstrates this exact mapping. The consensus acceleration ${c}(t)$ translates the entire velocity framework along a curved path. The transition of the solid framework confirms that the structural topology is rigorously preserved without rotation or scaling. In the position space, this velocity shift integrates into synchronized curved flocking trajectories.
	
	\subsection{Rotational Subspace to Helical and Circular Orbiting}
	\begin{lemma}[\textbf{Rotational Subspace}]
		For a velocity graph utilizing distance or angle measurements, the rotational subspace $\mathcal{S}_R = \{ (I_n \otimes \Omega)v \mid \Omega \in \mathfrak{so}(d) \}$ is a subset of $\text{Null}(R_v(v))$.
	\end{lemma}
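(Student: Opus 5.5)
The plan mirrors the proof of the translational lemma. We show that each measurement $r_k$ is invariant under a one-parameter family of transformations of the velocity framework, and then differentiate that invariance at the identity. Concretely, fix $\Omega \in \mathfrak{so}(d)$, i.e. $\Omega^T = -\Omega$, and consider the curve $v(s) = (I_n \otimes e^{s\Omega})v$. Each $Q(s) = e^{s\Omega}$ lies in $\mathrm{SO}(d)$, so $Q(s)^TQ(s) = I_d$. Moreover,
\begin{equation*}
\frac{d}{ds}v(s)\big|_{s=0} = (I_n \otimes \Omega)v .
\end{equation*}
If $r_{\mathcal{G}}(v(s)) = r_{\mathcal{G}}(v)$ for all $s$, then the chain rule gives $R_v(v)(I_n\otimes\Omega)v = \mathbf{0}$. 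This is exactly the claim $\mathcal{S}_R \subseteq \text{Null}(R_v(v))$.

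The invariance must be checked for each modality separately.

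\textbf{Distance constraints.} The relative velocity transforms as $Q(s)v_i - Q(s)v_j = Q(s)(v_i - v_j)$. Hence
\begin{equation*}
\|Q(s)(v_i-v_j)\|^2 = (v_i-v_j)^TQ(s)^TQ(s)(v_i-v_j) = \|v_i-v_j\|^2 .
\end{equation*}

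\textbf{Angle constraints.} For an angle measurement at a shared vertex $i$ between edges $(i,j)$ and $(i,k)$, the cosine
\begin{equation*}
\frac{(v_i-v_j)^T(v_i-v_k)}{\|v_i-v_j\|\,\|v_i-v_k\|}
\end{equation*}
depends only on inner products and norms of relative velocities. Orthogonal maps preserve both, so the angle is invariant as well.

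As a sanity check that avoids the exponential map, I would also verify the identity directly on Jacobian rows. For distance, the row block of edge $(i,j)$ is $2(v_i-v_j)^T$ at $v_i$ and $-2(v_i-v_j)^T$ at $v_j$. Multiplying by $\delta v_i = \Omega v_i$ and $\delta v_j = \Omega v_j$ gives $2(v_i-v_j)^T\Omega(v_i-v_j)$. This is zero because $x^T\Omega x = 0$ for every skew-symmetric $\Omega$.

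The main obstacle is the angle case. Its Jacobian is messier, and it is singular wherever some $v_i = v_j$, i.e. on velocity consensus. There I would state that the lemma holds on the open set where all relevant relative velocities are nonzero, so that $r_{\mathcal{G}}$ is differentiable. On that set the invariance argument applies verbatim. I would also remark that bearing constraints are excluded precisely because $Q(s)$ does change the unit direction vectors, which is why the lemma restricts to distance and angle measurements.
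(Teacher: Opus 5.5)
Your proposal is correct and follows the same route as the paper: distance and angle measurements are invariant under $\mathrm{SO}(d)$, and differentiating this invariance along the rotation generated by $\Omega$ shows that the Jacobian annihilates $(I_n\otimes\Omega)v$. Your version is more careful than the paper's first-order sketch $Q\approx I_d+\epsilon\Omega$: you use the exact curve $e^{s\Omega}$ with the chain rule, add the explicit check $2(v_i-v_j)^T\Omega(v_i-v_j)=0$, and note that angle constraints are only differentiable where the relevant relative velocities are nonzero.
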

	\begin{proof}
		The proof follows from the invariance of lengths and angles under $SO(d)$. A global rotation in velocity space yields $\tilde{v}_i = Q v_i$. The infinitesimal generator is $Q \approx I_d + \epsilon \Omega$, where $\Omega = -\Omega^T \in \mathfrak{so}(d)$. The trivial motion is $\delta v_i = \Omega v_i$, stacking to $\delta v = (I_n \otimes \Omega)v$. Since distance/angle functions are invariant to $SO(d)$, their Jacobians annihilate these generators. % \hfill $\blacksquare$
	\end{proof}
	
	\begin{theorem}[\textbf{Synchronized Helical and Circular Orbiting}]\label{thm:Rotational}
		Assume the framework is velocity-rigid and its acceleration is strictly confined to a time-invariant rotational null space $a(t) = (I_n \otimes \Omega)v(t)$ for $\Omega \in \mathfrak{so}(d)$. Then, the velocity vectors map onto a Lie group manifold, preserving the graph's topology. Concurrently, the macroscopic integration maps the agents onto synchronized \textbf{helical trajectories} in the position space. These degenerate into \textbf{circular orbits} if and only if the initial velocities reside entirely within the range space of $\Omega$.
	\end{theorem}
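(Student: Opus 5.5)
The plan is to solve the linear ODE explicitly and then integrate once more. Because $a(t) = \dot v(t) = (I_n \otimes \Omega) v(t)$ is block diagonal, each agent obeys $\dot v_i = \Omega v_i$. Hence
\[
v_i(t) = e^{\Omega t} v_i(0),
\]
and $e^{\Omega t} \in SO(d)$ because $\Omega$ is skew-symmetric. Thus every velocity evolves on the orbit of the one-parameter subgroup $\{e^{\Omega t}\} \subset SO(d)$, which is the Lie group manifold claim. For any edge we get $v_i(t)-v_j(t) = e^{\Omega t}(v_i(0)-v_j(0))$. All relative velocities are rotated by a common orthogonal matrix, so norms and angles are preserved. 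By the Rotational Subspace lemma, the distance and angle measurements $r_{\mathcal{G}}(v(t))$ stay equal to $r_{\mathcal{G}}(v(0))$ for all $t$, and the graph's realization is preserved up to rotation.

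For the positions, I would decompose $\mathbb{R}^d = \text{Range}(\Omega) \oplus \text{Null}(\Omega)$. This decomposition is orthogonal because $\Omega^T = -\Omega$ gives $\text{Range}(\Omega) = \text{Null}(\Omega)^\perp$. Write $v_i(0) = v_i^{\perp} + v_i^{\parallel}$ with $v_i^\perp \in \text{Range}(\Omega)$ and $v_i^\parallel \in \text{Null}(\Omega)$. Then $e^{\Omega t} v_i^\parallel = v_i^\parallel$. Also, $\Omega$ is invertible on the invariant subspace $\text{Range}(\Omega)$; call the inverse $\Omega^{+}$ (the Moore--Penrose pseudoinverse). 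Integrating gives
\[
p_i(t) = p_i(0) + \Omega^{+}\big(e^{\Omega t}-I_d\big) v_i^\perp + v_i^\parallel t .
\]
The first varying term lives in $\text{Range}(\Omega)$. Using the real normal form of $\Omega$ (blocks $\omega_k J$ with $J=\begin{pmatrix}0&-1\\1&0\end{pmatrix}$), this term is a superposition of uniform circular motions in the invariant 2-planes, with radius $\|P_k v_i^\perp\|/\omega_k$ in each plane. Here $P_k$ denotes the orthogonal projection onto the $k$-th plane. Adding the constant drift $v_i^\parallel t$ along the axis $\text{Null}(\Omega)$ gives a helix. All agents share the same angular rates $\omega_k$ and phase $e^{\Omega t}$, which gives the synchronization.

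The ``if and only if'' for circular orbits then reads off the formula. If $v_i(0) \in \text{Range}(\Omega)$, then $v_i^\parallel = 0$ and $p_i(t)$ is bounded and periodic about the center $c_i = p_i(0) - \Omega^{+} v_i(0)$: it is a circle when there is a single rotation plane, or a torus-type orbit when several planes are present. Conversely, if $v_i^\parallel \neq 0$, the term $v_i^\parallel t$ is unbounded and orthogonal to the bounded part, so the trajectory is not a closed orbit.

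The main obstacle is interpretive. For $d\ge4$, or when $\Omega$ has several distinct rates $\omega_k$, the ``circle'' is really a quasi-periodic $SO(d)$ orbit. I would state the result with the planar-rotation case ($\text{rank}\,\Omega = 2$, for example $d=2,3$) as the literal circle/helix statement, and note the generalization. The other point needing care is justifying $\Omega^{+} e^{\Omega t} = e^{\Omega t}\Omega^{+}$ on $\text{Range}(\Omega)$ when integrating, which holds because both maps commute with $\Omega$.
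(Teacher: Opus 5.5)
Your proposal is correct and follows the paper's proof essentially step for step: $v_i(t)=e^{\Omega t}v_i(0)$ with $e^{\Omega t}\in SO(d)$, the orthogonal splitting $v_i(0)=v_i^\perp+v_i^\parallel$ along $\text{Range}(\Omega)\oplus\text{Null}(\Omega)$, the pseudoinverse integration $p_i(t)=p_i(0)+\Omega^{\dagger}(e^{\Omega t}-I_d)v_i^\perp+v_i^\parallel t$, and reading off the circular case from $v_i^\parallel=\mathbf{0}$. Your normal-form discussion refines the paper, which calls the bounded part a circle without qualification, but state it precisely: the bounded part is a circle exactly when the nonzero components of $v_i^\perp$ share one rate, even across several planes, so it fails to be a circle only when $v_i^\perp$ has components in planes with distinct rates $\omega_k$ (not merely when $d\ge4$ or several planes are present), and it is then periodic only if those rates are commensurate.
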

	\begin{proof}
		The prescribed acceleration yields an ordinary differential equation:
		\begin{equation}
			\dot{v}_i(t) = \Omega v_i(t).
		\end{equation}
		The unique analytical solution is $v_i(t) = e^{\Omega t} v_i(0)$. Since $\Omega \in \mathfrak{so}(d)$, the matrix exponential $e^{\Omega t} \in SO(d)$ constitutes a global rotation, ensuring $\|v_i(t) - v_j(t)\|^2 = \|v_i(0) - v_j(0)\|^2$. Exact velocity rigidity is preserved.
		
		To determine the macroscopic position trajectory ($d \ge 2$), we orthogonally decompose the initial velocity as $v_i(0) = v_i^\perp + v_i^\parallel$, where $v_i^\perp \in \text{Range}(\Omega)$ and $v_i^\parallel \in \text{Null}(\Omega)$. Since $\Omega v_i^\parallel = \mathbf{0}$, $e^{\Omega t}v_i^\parallel = v_i^\parallel$. Integrating the velocity yields:
		\begin{align}
			p_i(t) &= p_i(0) + \int_{0}^{t} e^{\Omega \tau} (v_i^\perp + v_i^\parallel) \text{d}\tau \nonumber\\
			&= p_i(0) + \int_{0}^{t} e^{\Omega \tau} v_i^\perp \text{d}\tau + v_i^\parallel t.
		\end{align}
		Let $\Omega^\dagger$ denote the Moore-Penrose pseudoinverse. Because $v_i^\perp$ is strictly in the range of $\Omega$, the integral evaluated on this orthogonal complement yields a bounded periodic function:
		\begin{equation}
			p_i(t) = \underbrace{p_i(0) + \Omega^\dagger(e^{\Omega t} - I_d)v_i^\perp}_{\text{Circular orbiting}} + \underbrace{v_i^\parallel t}_{\text{Linear drift}}.
		\end{equation}
		Geometrically, the agent traces a circular orbit around an individualized moving axis $c_i(t) = p_i(0) - \Omega^\dagger v_i^\perp + v_i^\parallel t$. If and only if $v_i(0) \in \text{Range}(\Omega)$, the null space projection vanishes ($v_i^\parallel = \mathbf{0}$). The linear drift is eliminated, generating circular orbits revolving around static individual centers $c_i = p_i(0) - \Omega^\dagger v_i(0)$. % \hfill $\blacksquare$
	\end{proof}
	
	\begin{remark}
		The dimensionality $d$ dictates the invertibility of $\Omega$. When $d=3$, $\Omega$ is universally singular, inherently yielding helical trajectories unless the initial velocity is strictly orthogonal to the rotational axis. Conversely, for generic 2D planar networks, $\Omega$ is strictly of full rank ($\text{Null}(\Omega) = \{\mathbf{0}\}$), guaranteeing $v_i^\parallel = \mathbf{0}$, meaning any 2D rotational trivial motion generates pure circular orbits.
	\end{remark}
	
	\vspace{-3mm}
	\begin{figure}[htbp]
		\centering
		\hspace{-10mm}
		\includegraphics[width=1.1\linewidth]{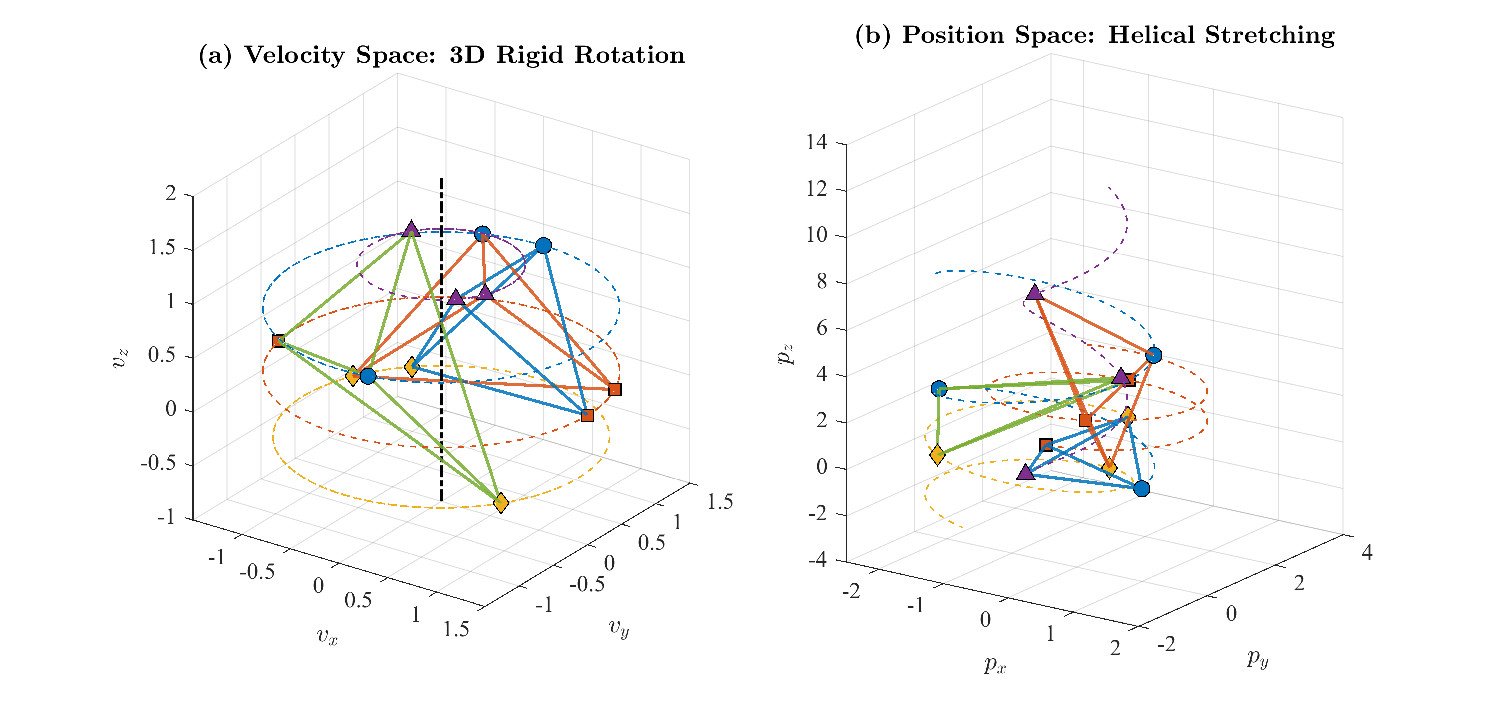}
		\vspace{-8mm}
		\caption{Conceptual schematics of the generic 3D rotational kinematic mapping. (a) In the velocity space, a 3D framework generically rotates around the singular axis $\text{Null}(\Omega)$ as a rigid body. (b) In the position space, constant but distinct null-space velocity components ($v_i^\parallel \neq v_j^\parallel$) cause varying helical pitches.}
		\label{fig:rotation_figures}    
	\end{figure}
	
	This mapping is visually validated in Fig. \ref{fig:rotation_figures} for a 3D network. As confirmed in Fig. \ref{fig:rotation_figures}(a), the 3D velocity framework uniformly rotates around $\text{Null}(\Omega)$ as a perfect rigid body. Crucially, the integration reveals profound geometric decoupling (Fig. \ref{fig:rotation_figures}(b)). Because the framework assigns distinct components $v_i^\parallel \neq v_j^\parallel$ to each agent, spatial trajectories naturally evolve into synchronized helices with varying ascending pitches. Consequently, the position-space framework autonomously undergoes continuous vertical stretching and shearing, powerfully visualizing how generalized velocity rigidity uniquely permits extreme spatial shape flexibility.
	
	\begin{corollary}[\textbf{Condition for Concentric Orbiting}]\label{cor:concentric}
		Under the conditions of Theorem \ref{thm:Rotational}, assuming initial velocities reside in the range space of $\Omega$, the network executes concentric circular orbits around a single common global center $c \in \mathbb{R}^d$ if and only if:
		\begin{equation}
			v_i(0) - v_j(0) = \Omega \big(p_i(0) - p_j(0)\big), \quad \forall (i, j) \in \mathcal{E}.
		\end{equation}
		Equivalently, this implies $v_i(0) = \Omega \big(p_i(0) - c\big)$ for all $i \in \mathcal{V}$.
	\end{corollary}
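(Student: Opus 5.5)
We build directly on the closed-form trajectory from the proof of Theorem \ref{thm:Rotational}. Since $v_i(0)\in\text{Range}(\Omega)$, we have $v_i^\parallel=\mathbf{0}$, and each agent moves on a circle about the static center
\[
c_i = p_i(0)-\Omega^\dagger v_i(0).
\]
\emph{Concentric orbiting} then means $c_i=c_j$ for all $i,j$. The plan is to show that this is equivalent to the stated edgewise condition, and that both are equivalent to $v_i(0)=\Omega(p_i(0)-c)$. The key algebraic facts are that $\Omega\Omega^\dagger$ is the orthogonal projector onto $\text{Range}(\Omega)$, so $\Omega\Omega^\dagger v_i(0)=v_i(0)$, and that $\Omega^\dagger\Omega$ is the projector onto $\text{Range}(\Omega^T)=\text{Range}(\Omega)$, using skew-symmetry.

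\textbf{Sufficiency.} Suppose $v_i(0)-v_j(0)=\Omega(p_i(0)-p_j(0))$ on every edge. Then
\[
c_i-c_j=(p_i(0)-p_j(0))-\Omega^\dagger\Omega\,(p_i(0)-p_j(0)).
\]
This is the component of $p_i(0)-p_j(0)$ in $\text{Null}(\Omega)$. This is the main obstacle: the centers need not coincide in $\text{Null}(\Omega)$, for example along the rotation axis in $d=3$. I would resolve it by interpreting the "center" of a circular orbit in the presence of a singular $\Omega$ as the rotation axis through the circle center, i.e.\ taking centers modulo $\text{Null}(\Omega)$. Equivalently, one projects positions onto $\text{Range}(\Omega)$, which is the setting in which orbits are genuinely circular. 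Generically in $d=2$ this issue disappears, since $\Omega$ is invertible. With this interpretation, $c_i\equiv c_j$ along each edge. Connectivity of $\mathcal{G}$, implicit in rigidity, propagates this equality along paths, so all centers agree and give a common $c$.

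\textbf{Necessity and the equivalent form.} Suppose instead that all $c_i$ equal $c$. Applying $\Omega$ to $c=p_i(0)-\Omega^\dagger v_i(0)$ gives
\[
\Omega c=\Omega p_i(0)-v_i(0), \qquad\text{that is,}\qquad v_i(0)=\Omega(p_i(0)-c)\ \ \text{for every } i.
\]
Subtracting this identity for $i$ and $j$ yields the edge condition. Conversely, if $v_i(0)=\Omega(p_i(0)-c)$ for all $i$, substitute into the formula for $c_i$:
\[
c_i=p_i(0)-\Omega^\dagger\Omega(p_i(0)-c),
\]
which equals $c$ up to a $\text{Null}(\Omega)$ component. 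The trajectory is then $p_i(t)=c+e^{\Omega t}(p_i(0)-c)$, a rigid rotation of the whole position framework about $c$. This confirms concentric orbiting directly, and this last verification could also serve as the cleanest route for sufficiency.

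Finally, to pass from the edgewise condition to the global form, I would use connectivity once more. Define $w_i=v_i(0)-\Omega p_i(0)$. The edge condition says $w_i=w_j$ on every edge, so $w_i\equiv w$ is constant. Since each $v_i(0)\in\text{Range}(\Omega)$, the common value satisfies $w\in\text{Range}(\Omega)$, so $w=-\Omega c$ for some $c$ and hence $v_i(0)=\Omega(p_i(0)-c)$. This closes the chain of equivalences.
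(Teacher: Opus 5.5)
Your necessity argument is the same as the paper's entire proof: set $c_i=c_j$, left-multiply by $\Omega$, and use $\Omega\Omega^\dagger v_i(0)=v_i(0)$. The paper stops there. It never proves the converse, never uses connectivity to pass from edges to all agents, and simply asserts the global form $v_i(0)=\Omega(p_i(0)-c)$. Your sufficiency analysis therefore goes beyond the paper. So does your $w_i=v_i(0)-\Omega p_i(0)$ argument, where $w\in\text{Range}(\Omega)$ gives $w=-\Omega c$ with $c$ unique only modulo $\text{Null}(\Omega)$.

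Your worry about the null-space component is not a gap in your reasoning. It is a defect in the statement. Left-multiplication by $\Omega$ cannot be undone when $\Omega$ is singular, and your identity $c_i-c_j=(I_d-\Omega^\dagger\Omega)(p_i(0)-p_j(0))$ is exactly what it discards. A concrete counterexample in $d=3$:
\begin{itemize}
\item take $\Omega(x,y,z)^T=(-y,x,0)^T$, $c=\mathbf{0}$, $p_1(0)=(1,0,0)^T$, $p_2(0)=(0,1,1)^T$, and $v_i(0)=\Omega p_i(0)$;
\item both velocities lie in $\text{Range}(\Omega)$ and the edge condition holds;
\item yet $c_1=\mathbf{0}$ and $c_2=(0,0,1)^T$, so the orbits are coaxial but not concentric.
\end{itemize}
Thus the ``if'' direction is false as literally stated whenever $\Omega$ is singular.

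Rather than reinterpreting ``center'', state the corrected result explicitly. For connected $\mathcal{G}$, the edge condition is equivalent to $c_i-c_j\in\text{Null}(\Omega)$ for all $i,j$. That in turn is equivalent to the rigid rotation $p_i(t)=c+e^{\Omega t}(p_i(0)-c)$ you derived, which is your cleanest route to sufficiency. A single common center additionally requires $p_i(0)-p_j(0)\in\text{Range}(\Omega)$ on every edge. This condition is also necessary, since $c_i=c_j$ forces $p_i(0)-p_j(0)=\Omega^\dagger(v_i(0)-v_j(0))\in\text{Range}(\Omega)$. It holds automatically in $d=2$ with $\Omega\neq 0$.
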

	\begin{proof}
		From Theorem \ref{thm:Rotational}, the individual center of rotation is $c_i = p_i(0) - \Omega^\dagger v_i(0)$. Macroscopic concentric orbiting requires $c_i = c_j = c$, yielding:
		\begin{equation}
			p_i(0) - \Omega^\dagger v_i(0) = p_j(0) - \Omega^\dagger v_j(0).
		\end{equation}
		Since $v_i(0) \in \text{Range}(\Omega)$, the pseudoinverse properties guarantee $\Omega \Omega^\dagger v_i(0) = v_i(0)$. Left-multiplying by $\Omega$ simplifies the relation to:
		\begin{equation}
			\Omega p_i(0) - v_i(0) = \Omega p_j(0) - v_j(0).
		\end{equation}
		Algebraic rearrangement immediately yields $v_i(0) - v_j(0) = \Omega \big(p_i(0) - p_j(0)\big)$. % \hfill $\blacksquare$
	\end{proof}
	To rigorously validate the geometric transition dictated by Corollary \ref{cor:concentric}, Fig. \ref{fig:concentric_comprehensive} presents a comprehensive comparative analysis of the rotational null space. Under generic, uncoupled initial states (Fig. \ref{fig:concentric_comprehensive}(a), (c), (e)), the position-velocity mapping yields independent orbiting centers, resulting in a macroscopic trajectory that lacks concentric synchronization. The velocity-position relationship exhibits random scatter (Fig. \ref{fig:concentric_comprehensive}(e)). 
	
	Conversely, enforcing the exact algebraic coupling condition $v_i = \Omega(p_i - c)$ (Fig. \ref{fig:concentric_comprehensive}(b), (d)) explicitly aligns the state vectors, forcing the network into perfectly synchronized concentric orbits. As visually proven by the center collapse analysis in Fig. \ref{fig:concentric_comprehensive}(f), this structural coupling algebraically forces the widely dispersed independent centers $c_i$ to collapse into a singular, unified geometric center $c$. This confirms that concentric orbiting is not an arbitrary maneuver, but a strict algebraic consequence of position-velocity coupling in the null space.
	\begin{figure}[htbp]
		\centering
		\includegraphics[width=\linewidth]{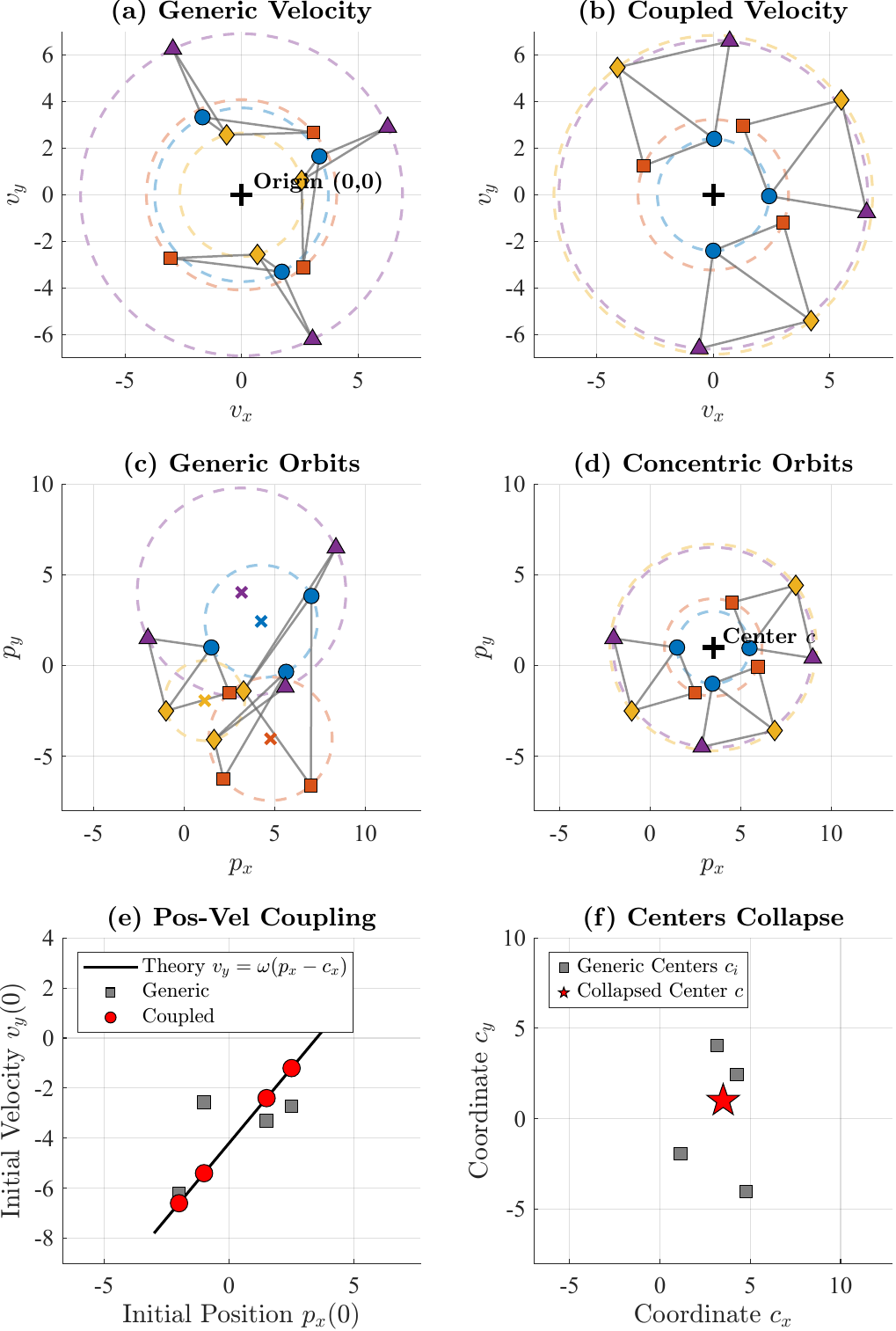}
		\caption{Comparative analysis of generic versus structurally coupled states in the rotational null space. \textbf{Left Column (Generic Case):} (a) Under algebraically independent initial states, the offset velocity framework revolves rigidly around the origin $(0,0)$. (c) Macroscopic integration yields individualized, non-concentric orbits around widely dispersed geometric centers $c_i$ (marked by crosses). (e) The generic uncoupled states (gray squares) exhibit random scatter, violating positional coupling. \textbf{Right Column (Coupled Special Case):} (b) The velocity graph rigidly orbits the origin. (d) Enforcing the exact algebraic condition $v_i = \Omega(p_i - c)$ explicitly offsets the formation and forces the network into perfectly synchronized concentric orbits around $c$. (f) Mathematical verification of the geometric transition: the widely dispersed independent centers algebraically collapse into the singular offset common center $c$ (the red star).}
		\label{fig:concentric_comprehensive}
		\vspace{-5 mm}
	\end{figure}
	
	\subsection{Scaling Subspace to Exponential Dilation}
	
	\begin{theorem}[\textbf{Exponential Dilation/Contraction}]\label{thm:Scaling}
		For a velocity graph utilizing generalized bearing constraints, if the acceleration profile resides strictly in the scaling null space $\mathcal{S}_S = \text{span}\{v(t)\}$, the multi-agent system undergoes an exponential spatial dilation or contraction while rigidly preserving its velocity-space topological directions.
	\end{theorem}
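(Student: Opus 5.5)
The proof follows the pattern of Theorems \ref{thm:Translational} and \ref{thm:Rotational}: write the null-space constraint as an ordinary differential equation, solve it in closed form, check that the constraints are invariant along the solution, and then integrate once more to obtain positions. The first step is to confirm that $\mathcal{S}_S=\text{span}\{v\}$ lies in $\text{Null}(R_v(v))$ for bearing constraints. The bearing function $r_k=(v_i-v_j)/\|v_i-v_j\|$ is homogeneous of degree zero in $v$. Hence $r_k((1+\epsilon)v)=r_k(v)$, and by Euler's identity for degree-zero homogeneous functions the directional derivative along $v$ vanishes, so $R_v(v)v=\mathbf{0}$. This is valid whenever $v_i\neq v_j$ on every edge, which I would state as a standing assumption. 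The singular case of velocity consensus is deferred to Section \ref{sec:DEGENERATE}.

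Next, write the acceleration as $a(t)=\lambda(t)v(t)$ for a scalar $\lambda(t)$. This is the only freedom allowed by $\mathcal{S}_S$, and I take $\lambda$ to be integrable. The ODE $\dot v_i=\lambda(t)v_i$ has the unique solution $v_i(t)=s(t)v_i(0)$ with $s(t)=\exp\big(\int_0^t\lambda(\tau)\text{d}\tau\big)>0$. It follows that $v_i(t)-v_j(t)=s(t)(v_i(0)-v_j(0))$. Because $s(t)>0$, every bearing $r_k$ is exactly constant in time, so the velocity-space directions are preserved globally and not merely to first order. Only the magnitudes scale. In the time-invariant case $\lambda\equiv\lambda_0$ this scaling is $e^{\lambda_0 t}$, which gives the exponential character: dilation if $\lambda_0>0$ and contraction if $\lambda_0<0$.

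The position trajectories follow by integration: $p_i(t)=p_i(0)+S(t)v_i(0)$ with $S(t)=\int_0^t s(\tau)\text{d}\tau$. For constant $\lambda_0\neq0$ this is $S(t)=(e^{\lambda_0 t}-1)/\lambda_0$. Relative positions therefore evolve as $p_i-p_j=(p_i(0)-p_j(0))+S(t)(v_i(0)-v_j(0))$. For large $t$ with $\lambda_0>0$, the second term dominates and grows exponentially. For $\lambda_0<0$, $S(t)\to-1/\lambda_0$ and each agent converges to the limit point $p_i(0)-v_i(0)/\lambda_0$.

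The main obstacle is stating the spatial conclusion honestly. A general initial state produces an affine-in-$S(t)$ deformation, not an exact homothety. To obtain exact spatial homothety I would impose a coupling condition analogous to Corollary \ref{cor:concentric}, namely $v_i(0)=\lambda_0(p_i(0)-c)$ for a common center $c$. Under this condition, $p_i(t)-c=(1+\lambda_0S(t))(p_i(0)-c)=e^{\lambda_0 t}(p_i(0)-c)$, which is an exact exponential dilation or contraction about $c$. I would therefore present the theorem's ``exponential dilation'' claim as asymptotic in the generic case and exact under this coupling. I would also note that the coupling condition, like the concentric-orbiting condition, can be imposed edgewise as $v_i(0)-v_j(0)=\lambda_0(p_i(0)-p_j(0))$.
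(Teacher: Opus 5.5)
Your proposal is correct and follows the paper's proof: solve $\dot v_i=\lambda_0 v_i$ to get $v_i(t)=e^{\lambda_0 t}v_i(0)$, note that the positive scale factor leaves every bearing $(v_i-v_j)/\|v_i-v_j\|$ unchanged, and integrate to $p_i(t)=p_i(0)+\frac{e^{\lambda_0 t}-1}{\lambda_0}v_i(0)$ (the paper writes $s_0$ for your $\lambda_0$). Your additions---the degree-zero homogeneity check that $\mathcal{S}_S\subseteq\text{Null}(R_v)$, the time-varying $\lambda(t)$, the remark that for $\lambda_0<0$ generic agents simply come to rest at $p_i(0)-v_i(0)/\lambda_0$, and the coupling $v_i(0)=\lambda_0(p_i(0)-c)$ (Corollary~\ref{cor:scaling_homothety} with $\kappa=\lambda_0$, which gives the pure factor $e^{\lambda_0 t}$)---are sound refinements of points the paper assumes, omits, or defers to that corollary.
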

	\begin{proof}
		Let acceleration be $a_i(t) = s_0 v_i(t)$ ($s_0 \in \mathbb{R}$). The differential equation $\dot{v}_i(t) = s_0 v_i(t)$ yields $v_i(t) = e^{s_0 t} v_i(0)$. Substituting this into the normalized direction vectors yields:
		\begin{equation}
			\frac{e^{s_0 t}(v_i(0) - v_j(0))}{\|e^{s_0 t}(v_i(0) - v_j(0))\|} = \frac{v_i(0) - v_j(0)}{\|v_i(0) - v_j(0)\|},
		\end{equation}
		confirming that topological bearing constraints remain perfectly unbroken. Integrating the velocity vector gives the macroscopic position trajectory:
		\begin{equation}
			p_i(t) = p_i(0) + \int_{0}^{t} e^{s_0 \tau} v_i(0) \text{d}\tau = p_i(0) + \frac{e^{s_0 t} - 1}{s_0} v_i(0).
		\end{equation}
		For $s_0 > 0$, the distance grows exponentially along initial rays, manifesting as spatial dilation. % \hfill $\blacksquare$
	\end{proof}
	
	\begin{remark}
		In stark contrast to $\mathcal{S}_R$, where the generator $\Omega \in \mathfrak{so}(d)$ is dimension-dependent, the scaling generator relies on the isotropic scalar matrix $s_0 I_d$. Because $I_d$ is unconditionally full-rank in any dimension $d \ge 2$, the exponential dilation mapping is universally dimension-agnostic, generating straight outward rays without orthogonal drifting.
	\end{remark}
	
	\begin{corollary}[\textbf{Condition for Macroscopic Spatial Homothety}]\label{cor:scaling_homothety}
		Under generic uncoupled initial states, exponential dilation causes the physical spatial framework to distort and shear. The position-space framework will undergo a perfect spatial homothety about a common geometric center $c \in \mathbb{R}^d$, preserving its spatial similarity shape, if and only if:
		\begin{equation}
			v_i(0) - v_j(0) = \kappa \big(p_i(0) - p_j(0)\big), \quad \forall (i, j) \in \mathcal{E},
		\end{equation}
		equivalently implying $v_i(0) = \kappa (p_i(0) - c)$ for some $\kappa \neq 0$.
	\end{corollary}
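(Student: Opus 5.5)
We take the explicit trajectory from the proof of Theorem \ref{thm:Scaling} as the starting point. Write
\begin{equation*}
\phi(t) = \frac{e^{s_0 t}-1}{s_0}.
\end{equation*}
Then $p_i(t) = p_i(0) + \phi(t) v_i(0)$, with $\phi(0)=0$, $\phi$ strictly monotone, and $\phi(t)\neq 0$ for $t \neq 0$. The relative positions are therefore
\begin{equation*}
p_i(t)-p_j(t) = \big(p_i(0)-p_j(0)\big) + \phi(t)\big(v_i(0)-v_j(0)\big).
\end{equation*}
We take a perfect spatial homothety about a common center $c$ to mean that $p_i(t) - c = \lambda(t)\,(p_i(0)-c)$ for all $i$, with a scalar $\lambda(t)>0$ independent of $i$. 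For a connected graph this is equivalent to $p_i(t)-p_j(t) = \lambda(t)(p_i(0)-p_j(0))$ on every edge. We will argue both directions of the equivalence using this edge form, and then pass to the center form.

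\textbf{Sufficiency.} Assume $v_i(0)-v_j(0) = \kappa(p_i(0)-p_j(0))$ on each edge. Substituting gives
\begin{equation*}
p_i(t)-p_j(t) = \big(1+\kappa\phi(t)\big)\big(p_i(0)-p_j(0)\big),
\end{equation*}
which is a uniform edge scaling with $\lambda(t) = 1+\kappa\phi(t)$. We then propagate along spanning-tree paths of the connected graph $\mathcal{G}$. The quantity $v_i(0)-\kappa p_i(0)$ is constant over neighbours, hence equal to a common vector $-\kappa c$ for all $i$. Since $\kappa\neq 0$, this defines $c$ and gives the center form $v_i(0)=\kappa(p_i(0)-c)$. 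Direct substitution then shows $p_i(t)-c = (1+\kappa\phi(t))(p_i(0)-c)$, which is the claimed homothety.

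\textbf{Necessity.} Suppose a homothety holds, so $\phi(t)\,(v_i(0)-v_j(0)) = (\lambda(t)-1)(p_i(0)-p_j(0))$ on every edge. Pick a single $t_1\neq 0$ and set $\kappa = (\lambda(t_1)-1)/\phi(t_1)$. Because $\kappa$ does not depend on the edge, the coupling condition follows immediately. The main obstacle is the requirement $\kappa \neq 0$ together with degenerate configurations. If $\lambda\equiv 1$, then $v_i(0)-v_j(0)=0$ on all edges, i.e.\ velocity consensus. In that case the bearing constraints of Theorem \ref{thm:Scaling} are undefined, and the motion is a pure translation rather than a nontrivial homothety. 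We will exclude this case explicitly, noting that a genuine dilation or contraction requires $\lambda\not\equiv 1$, which forces $\kappa\neq 0$; this case is deferred to the singularity analysis of Section \ref{sec:DEGENERATE}.

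We will also remark on the role of edges with $p_i(0)=p_j(0)$. Such edges make the per-edge ratio ill-defined, but they are harmless because the scalar $\lambda(t)$ is common to all edges by assumption. Finally, the opening sentence of the corollary follows by contrast. For generic initial states the vectors $v_i(0)-v_j(0)$ are not parallel to $p_i(0)-p_j(0)$ with a common ratio. The edge vectors then rotate toward the directions $v_i(0)-v_j(0)$ at edge-dependent rates, which destroys shape similarity and produces the distortion and shearing described in the statement.
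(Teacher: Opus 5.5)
Your sufficiency computation is the same as the paper's. Both substitute the coupling into the closed-form trajectory $p_i(t)=p_i(0)+\phi(t)v_i(0)$ of Theorem~\ref{thm:Scaling} and read off the common factor $\lambda(t)=1+\kappa\phi(t)$.

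Where you differ is in scope. The paper starts directly from the center form $v_i(0)=\kappa(p_i(0)-c)$ and checks only the ``if'' direction. It never derives a common center from the edgewise condition, and it never proves the ``only if'' direction. You supply both:
\begin{itemize}
\item Your spanning-tree propagation of $v_i(0)-\kappa p_i(0)$ gives the common center. Connectivity is in any case forced by infinitesimal velocity rigidity.
\item Evaluating $\phi(t)(v_i(0)-v_j(0))=(\lambda(t)-1)(p_i(0)-p_j(0))$ at one $t_1\neq 0$ gives necessity. Applied to the center form itself, the same evaluation gives $v_i(0)=\kappa(p_i(0)-c)$ with the same $c$, without any connectivity argument.
\end{itemize}
Your identification of $\kappa=0$ with velocity consensus, i.e.\ pure translation, is also what gives the statement's $\kappa\neq 0$ its meaning. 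The paper leaves this implicit.

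There is one inconsistency to repair. You require $\lambda(t)>0$, but the coupling only delivers $\lambda(t)=1+\kappa\phi(t)$. This hits zero at a finite $t>0$ when $\kappa<0<s_0$ or $\kappa<s_0<0$. The formation then collapses onto $c$ and re-emerges point-reflected. You have two options:
\begin{itemize}
\item Allow negative ratios, as the paper implicitly does by writing $|\lambda(t)|$, and note the isolated collapse instant.
\item Restrict the sufficiency claim to the interval where $1+\kappa\phi(t)>0$.
\end{itemize}
Necessity never uses the sign of $\lambda$, so that direction is unaffected.
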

	\begin{proof}
		Substituting $v_i(0) = \kappa(p_i(0) - c)$ into Theorem \ref{thm:Scaling} yields:
		\begin{align}
			p_i(t) &= p_i(0) + \frac{\kappa(e^{s_0 t} - 1)}{s_0} (p_i(0) - c).
		\end{align}
		Subtracting $c$ provides the relative geometry:
		\begin{equation}
			p_i(t) - c = \underbrace{\left[ 1 + \frac{\kappa}{s_0}(e^{s_0 t} - 1) \right]}_{\lambda(t)} (p_i(0) - c).
		\end{equation}
		The scalar multiplier is $\lambda(t)$. Consequently, inter-agent distances evolve identically as $\|p_i(t) - p_j(t)\| = |\lambda(t)| \|p_i(0) - p_j(0)\|$, mathematically proving that the network expands as a unified geometric entity relative to $c$. % \hfill $\blacksquare$
	\end{proof}
	
	The fundamental divergence between generic expansion and shape-preserving homothety (Corollary \ref{cor:scaling_homothety}) is visually deconstructed in Fig. \ref{fig:scaling_comprehensive}. In the generic case (Fig. \ref{fig:scaling_comprehensive}(a), (c)), uncoupled exponential dilation strictly preserves velocity-space bearing constraints but completely destroys the position-space spatial similarity, leading to severe structural distortion and shearing. Furthermore, as shown by the backward ray projection in Fig. \ref{fig:scaling_comprehensive}(f), the spatial rays of generic expansion fail to intersect, physically representing the lack of a shared homothety center.
	
	However, when the initial states are structurally coupled via $v_i(0) = \kappa(p_i(0)-c)$ (Fig. \ref{fig:scaling_comprehensive}(b), (d)), the expansion acts as a uniform spatial homothety. All backward rays perfectly intersect at the internal geometric center $c$ (Fig. \ref{fig:scaling_comprehensive}(f)), rigorously preserving the network's macroscopic geometric shape. This comparative verification powerfully underscores how specific state couplings can awaken absolute position-space geometric rigidity from the highly flexible velocity-rigid manifold.
	
	\begin{figure}[htbp]
		\centering
		\includegraphics[width=0.95\linewidth]{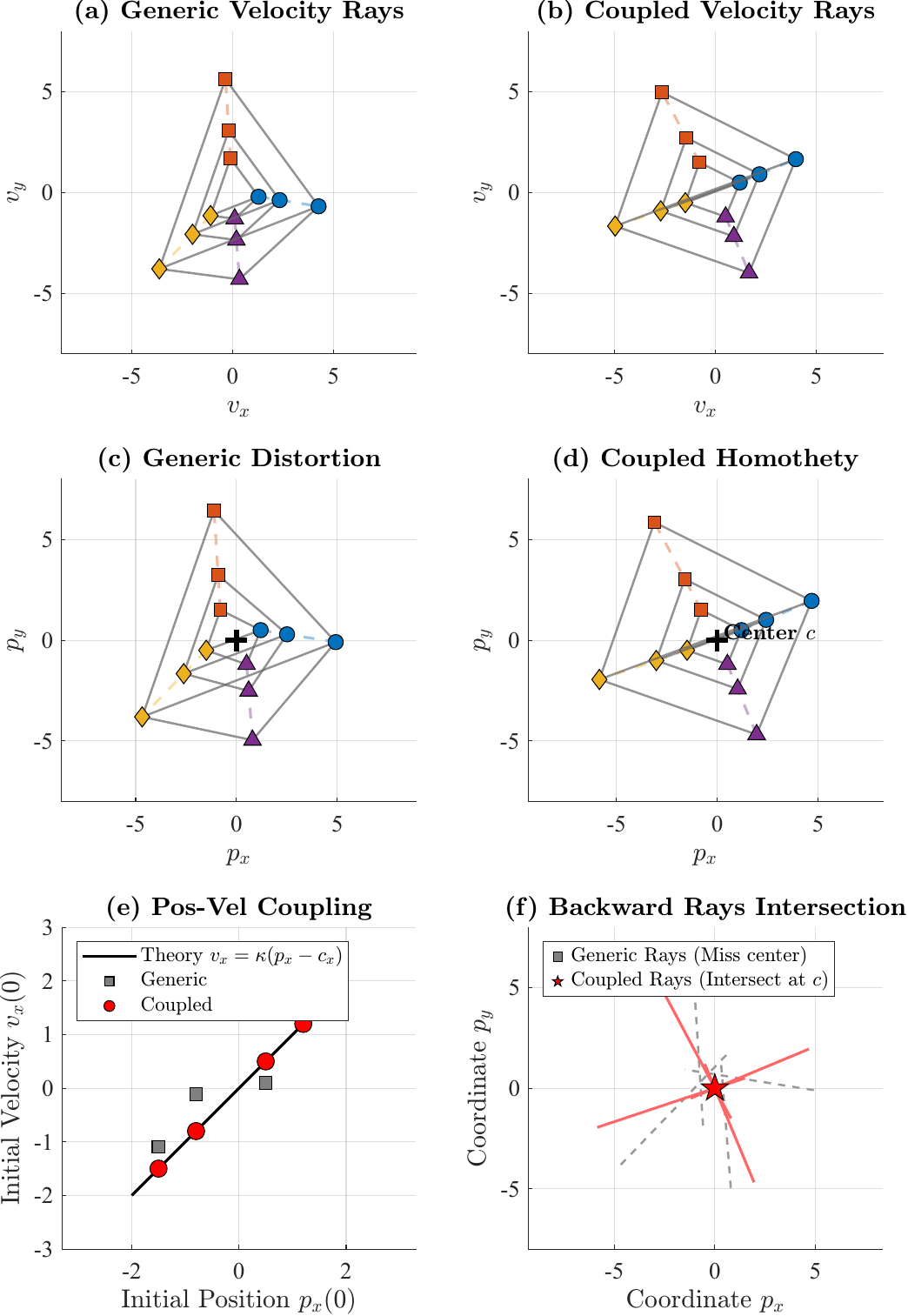}
        \vspace{-2mm}
		\caption{Comparative analysis of generic scaling expansion versus structurally coupled homothety. \textbf{Left Column (Generic Expansion):} (a) Driven by $s_0 I_d$, uncoupled velocity vectors expand exponentially along arbitrary rays. (c) As the network expands outward from the origin, it strictly preserves its velocity bearing constraints but undergoes severe spatial twisting and distortion. (e) Generic states randomly violate positional coupling. \textbf{Right Column (Coupled Homothety):} (b) Coupled initial velocities expand radially. (d) Enforcing $v_i(0) = \kappa(p_i(0)-c)$ perfectly aligns the network's expansion from an internal center $c$, resulting in a uniform spatial homothety that acts like a radially expanding balloon, rigidly preserving the spatial similarity shape. (f) Mathematical verification via backward ray projection: rays from the generic expansion (gray dashed) miss the origin, forming a disorganized web, whereas rays from the coupled expansion (red solid) perfectly intersect at the internal geometric homothety center $c$.}    
		\label{fig:scaling_comprehensive}
		\vspace{-6mm}
	\end{figure}
	
	%=========================================

	\section{DEGENERATE CONFIGURATIONS AND KINEMATIC SINGULARITIES}\label{sec:DEGENERATE}
	While classical positional rigidity requires frameworks to be strictly generic to avoid physical collisions, generalized velocity rigidity inherently embraces degenerate configurations. Coincident nodes ($v_i = v_j, i\neq j$) or collinear velocities represent physically meaningful kinematic states, such as consensus or unidirectional motion, rather than structural failures.
	
	\subsection{Velocity Coincidence and Topological Contraction}
	Since generalized velocity constraints rely on the relative velocity $v_i - v_j$, a singularity occurs when agents achieve identical velocities. Consider a distance constraint $r_k = \|v_i - v_j\|^2$; its Jacobian block is:
	\begin{equation}
		\frac{\partial r_k}{\partial v_i} = 2(v_i - v_j)^T.
	\end{equation}
	
	\begin{lemma}[\textbf{Rank Drop under Velocity Consensus}]
		If connected agents in $\mathcal{G}$ achieve velocity consensus ($v_i = v_j$), the corresponding row blocks in $R_v(v)$ evaluate to exactly zero, resulting in a local rank drop of the rigidity matrix.
	\end{lemma}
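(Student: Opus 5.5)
The plan is to compute the row of $R_v(v)$ attached to a single edge $k=(i,j)$ directly and then read off the rank consequence. I will work with the distance modality $r_k=\|v_i-v_j\|^2$ displayed just above the lemma. For this constraint the full row is
\begin{equation*}
\frac{\partial r_k}{\partial v}=\big[\,\mathbf{0}^T,\dots,2(v_i-v_j)^T,\dots,-2(v_i-v_j)^T,\dots,\mathbf{0}^T\,\big],
\end{equation*}
with the only nonzero blocks sitting in the columns of agents $i$ and $j$. This follows from $\partial r_k/\partial v_j=-2(v_i-v_j)^T$ together with the fact that $r_k$ does not depend on $v_l$ for $l\neq i,j$. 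Substituting $v_i=v_j$ kills both blocks, so the whole row is identically zero.

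The rank statement then follows from counting rows. Let $\mathcal{E}_c\subseteq\mathcal{E}$ be the set of edges whose endpoints are in consensus. Then $R_v(v)$ equals, up to a row permutation, the matrix $R_v^{\mathcal{E}\setminus\mathcal{E}_c}(v)$ stacked on a zero block. Hence
\begin{equation*}
\operatorname{rank}R_v(v)=\operatorname{rank}R_v^{\mathcal{E}\setminus\mathcal{E}_c}(v)\le m-|\mathcal{E}_c|.
\end{equation*}
To make the drop strict relative to the generic case, I compare with a generic configuration. By the Structural Isomorphism remark, a minimally infinitesimally velocity rigid graph (for example, Laman with $m=2n-3$ in 2D) has rows that are generically linearly independent, so its rank is $m$. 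Zeroing any row therefore lowers the rank to at most $m-1<m$. In particular, $\text{Null}(R_v)$ strictly contains $\mathcal{S}_{trivial}$, which is the ``local rank drop'' the lemma claims.

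For other modalities the argument should go through the same way, but a short case remark is needed. Bearing constraints $(v_i-v_j)/\|v_i-v_j\|$ are undefined at $v_i=v_j$, so for these I would read the lemma as covering only the distance constraint displayed above it. Alternatively, one can use the unnormalized or squared forms, whose Jacobians are likewise homogeneous in $v_i-v_j$ and therefore vanish at consensus.

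The main obstacle is making ``rank drop'' precise. A zero row lowers the rank only if that row was not already redundant. I would handle this by stating the strict inequality for minimally rigid graphs, where every row is independent by genericity. For non-minimal graphs I would claim only the weak inequality $\operatorname{rank}\le m-|\mathcal{E}_c|$, and note that the drop is strict whenever the removed rows were independent of the others.
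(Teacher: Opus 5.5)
Your first paragraph is the paper's proof. The paper observes only that the blocks $\pm 2(v_i-v_j)^T$ vanish at $v_i=v_j$, so edge $(i,j)$ imposes no constraint, and then treats the rank drop as evident.

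Your row count $\operatorname{rank}R_v(v)\le m-|\mathcal{E}_c|$ and the comparison with the generic rank $m$ of a minimally rigid graph go beyond the paper. Your caveat about redundant rows is needed, because the paper's unqualified ``rank drop'' can fail. Take $K_4$ in the plane, whose generic rank is $5$, and set $v_1=v_2$ with $v_1,v_3,v_4$ in general position. Given $\delta v_3,\delta v_4$, both $\delta v_1$ and $\delta v_2$ are uniquely fixed by the independent directions $v_1-v_3$ and $v_1-v_4$. Only the $3$-dimensional trivial motions survive, so the five nonzero rows still have rank $5$ and there is no drop. Restricting to rows that were independent, as in minimally rigid graphs, is what makes the conclusion true. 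Your further claim that $\text{Null}(R_v)$ then strictly contains $\mathcal{S}_{trivial}$ also holds, since $\dim\mathcal{S}_{trivial}(v)$ can only decrease at non-generic $v$.

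One aside is wrong. Unnormalized bearing-type constraints such as $v_i-v_j$, or $P(v_i-v_j)$ with $P$ a fixed projection, are linear in $v_i-v_j$. Their Jacobians are therefore constant and do not vanish at consensus. Homogeneity alone is not enough; the constraint must be homogeneous of degree at least $2$, as squared norms are. This does not affect your argument for the distance modality, which is the only case the paper proves.
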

	\begin{proof}
		If $v_i = v_j$, then $(v_i - v_j)^T = \mathbf{0}^T$. The partial derivatives with respect to $v_i$ and $v_j$ vanish, annihilating the constraints imposed by edge $(i,j)$. % \hfill $\blacksquare$
	\end{proof}
	
	Physically, this rank drop indicates a topological contraction where agents $i$ and $j$ coalesce into a single rigid body within the velocity manifold. The relative acceleration is constrained to zero ($\dot{v}_i = \dot{v}_j$). As illustrated in Fig. \ref{fig:translation}(a), global velocity consensus collapses the network into a single point in the velocity graph. Consequently, the network translates cohesively as a rigid body in the physical position space (Fig. 1(b)). It must be strictly emphasized that a topological contraction in the velocity space ($v_i = v_j$) merely dictates parallel curved trajectories, inherently guaranteeing zero spatial collisions provided the initial physical positions are distinct ($p_i(0) \neq p_j(0)$). This specific state characterizes the current kinematic configuration, which is mathematically distinct from the translational trivial motion subspace $\mathcal{S}_T$ governing uniform acceleration (Section IV.A).
	
	\begin{figure}[htbp]
		\centering
		\includegraphics[width=\linewidth]{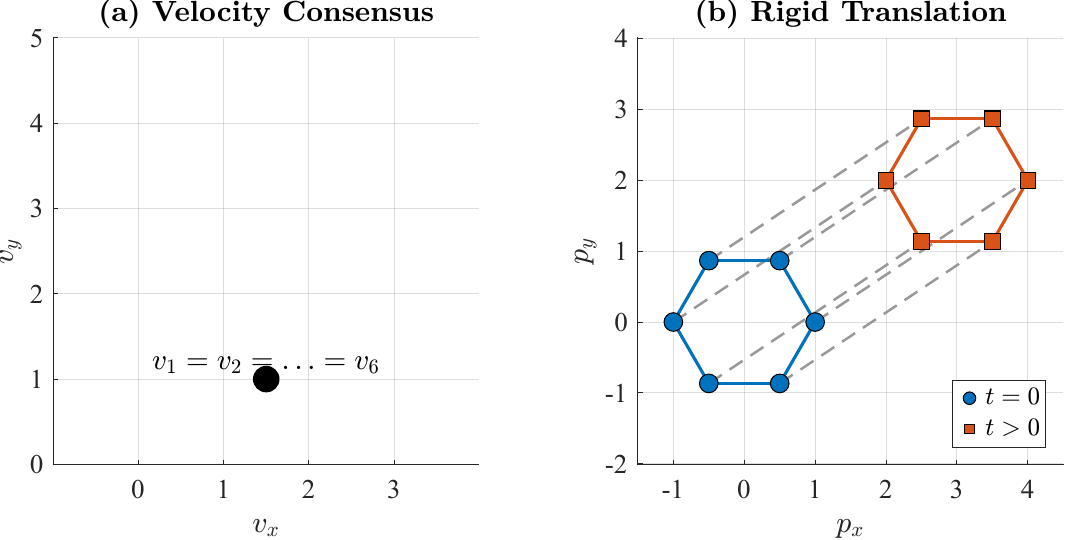}
		\vspace{-8mm}
		\caption{Constant-velocity translation illustrating velocity consensus. \textbf{(a)} In the velocity space, all agents coalesce into a single 0-dimensional point ($v_i = v_j$), resulting in a local rank drop. \textbf{(b)} In the physical position space, this topological contraction naturally maps to a cohesive, absolutely rigid spatial translation, smoothly transporting the initial framework along parallel linear trajectories without collisions.}
		\label{fig:translation}
		\vspace{-5mm}
	\end{figure}
	
	\subsection{Velocity Collinearity and Subspace Confinement}
	\begin{theorem}[\textbf{Degeneration of the Rotational Null Space}]
		If the network reaches a collinear velocity state, such that all $v_i(t)$ span a 1-dimensional subspace in $\mathbb{R}^d$ ($d \ge 2$), the rotational trivial motion subspace $\mathcal{S}_R$ degenerates, confining the macroscopic synchronized circular orbiting to planar oscillatory or linear sweeping motions.
	\end{theorem}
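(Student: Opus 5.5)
Write the collinear state as $v_i(t) = \alpha_i(t)\, u$ for a fixed unit vector $u \in \mathbb{R}^d$ and scalars $\alpha_i(t)$. The plan is to reduce the rotational subspace $\mathcal{S}_R = \{(I_n \otimes \Omega)v \mid \Omega \in \mathfrak{so}(d)\}$ to a problem about the single vector $u$. Then I will re-run the integration from Theorem \ref{thm:Rotational} on the reduced generator.

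\textbf{Step 1 (algebraic degeneration).} For any $\Omega \in \mathfrak{so}(d)$ we get $(I_n \otimes \Omega)v = \boldsymbol{\alpha} \otimes (\Omega u)$, with $\boldsymbol{\alpha} = [\alpha_1,\dots,\alpha_n]^T$. Hence $\mathcal{S}_R = \{\boldsymbol{\alpha} \otimes w \mid w \in \{\Omega u : \Omega \in \mathfrak{so}(d)\}\}$. The linear map $\Omega \mapsto \Omega u$ has image exactly $u^\perp$. To see this, note $u^T \Omega u = 0$ by skew-symmetry, and for any $w \perp u$ the generator $\Omega = w u^T - u w^T$ gives $\Omega u = w$. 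So $\dim \mathcal{S}_R \le d-1$, compared with the generic value $\min\{d(d-1)/2, \dots\}$. Every generator in the kernel $\{\Omega : \Omega u = 0\}$, which is $\mathfrak{so}(d-1)$ acting on $u^\perp$, produces no motion. In $d=3$ this says the rotation about the axis $u$ is lost. This is the claimed degeneration, and I also note that the rank of $R_v$ drops accordingly.

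\textbf{Step 2 (effective planar rotation).} Fix a realized $\Omega$ and set $w = \Omega u \neq 0$. The dynamics $\dot v_i = \Omega v_i$ with $v_i(0) = \alpha_i(0) u$ gives $v_i(t) = \alpha_i(0)\, e^{\Omega t} u$, so all velocities stay collinear along the common moving direction $e^{\Omega t} u$. I will use the canonical decomposition of $\Omega$. The only component of $u$ that rotates is its projection onto $\text{Range}(\Omega)$, and the motion of that projection is a superposition of planar rotations in the invariant 2-planes of $\Omega$. When $\Omega$ is the rank-two generator $\Omega = w u^T - u w^T$, this is a single rotation in $\text{span}\{u, w\}$. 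Applying Theorem \ref{thm:Rotational} then gives $p_i(t) = p_i(0) + \alpha_i(0)\big[\Omega^\dagger (e^{\Omega t} - I_d) u^\perp + u^\parallel t\big]$.

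\textbf{Step 3 (interpretation and obstacle).} Each agent's trajectory is the same curve $\gamma(t) = \Omega^\dagger (e^{\Omega t}-I)u^\perp + u^\parallel t$, scaled by $\alpha_i(0)$ and translated by $p_i(0)$. In the rank-two case this curve is a circle lying in the single plane $\text{span}\{u, w\}$ plus a linear drift. Viewed relative to the moving framework, $p_i - p_j = (p_i(0) - p_j(0)) + (\alpha_i - \alpha_j)\gamma(t)$. This is a planar oscillation along a common direction, or a linear sweep when $\alpha_i = \alpha_j$ or $u \in \text{Null}(\Omega)$. That is the "planar oscillatory or linear sweeping" conclusion.

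The main obstacle is making this precise for general $\Omega$ in $d \ge 4$. There $e^{\Omega t}u$ can wind through several invariant planes, so "planar" only holds for the minimal generators. I would handle this by restricting to the quotient $\mathfrak{so}(d)/\mathfrak{so}(d-1)$ representatives $w u^T - u w^T$. These are exactly the nontrivial directions identified in Step 1, and I would state the confinement for this canonical choice.
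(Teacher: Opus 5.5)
Your Steps 2--3 are the paper's proof. The paper also writes $v_i(0)=\alpha_i\mathbf{w}$ and solves $v_i(t)=\alpha_i e^{\Omega t}\mathbf{w}$. It splits $\mathbf{w}$ into its $\text{Range}(\Omega)$ and $\text{Null}(\Omega)$ parts, integrates with $\Omega^\dagger$ as in Theorem~\ref{thm:Rotational}, and reads off $p_i-p_j=(p_i(0)-p_j(0))+(\alpha_i-\alpha_j)\mathbf{s}(t)$. Step~1 is something the paper lacks: it never shows that $\mathcal{S}_R$ actually shrinks. Your identification of the image of $\Omega\mapsto\Omega u$ with $u^\perp$ does show this, but only for $d\ge3$. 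For $d=2$ the stabilizer $\mathfrak{so}(1)$ is trivial and $\dim\mathcal{S}_R=1=\dim\mathfrak{so}(2)$, so there the only ``degeneration'' is the position-space confinement of Step~3. Also, ``the rank of $R_v$ drops accordingly'' does not follow, because shrinking a subspace of the kernel forces no rank change. For distance constraints the rank does drop at a collinear state, but for another reason: each row becomes a multiple of the corresponding row of $H\otimes u^T$, with $H$ the incidence matrix. For $n\ge3$ this makes $\text{Null}(R_v)$ strictly larger than $\mathcal{S}_T+\mathcal{S}_R$.

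The $d\ge4$ obstacle you raise is real, and it is a flaw in the paper's own proof. The paper asserts that $\Omega$ confines $e^{\Omega t}\mathbf{w}^\perp$ to a single 2D invariant plane. That holds when $\text{rank}\,\Omega=2$, which is always the case for $d\le3$, but not in general. Take $\Omega=\text{diag}(J,2J)$ on $\mathbb{R}^4$, with $J$ the unit generator of $\mathfrak{so}(2)$, and $u=(e_1+e_3)/\sqrt2$. The shared curve $\int_0^t e^{\Omega\tau}u\,d\tau$ then has components $\sin t$, $1-\cos t$, $\tfrac12\sin 2t$, $\tfrac12(1-\cos 2t)$ (times $1/\sqrt2$). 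These are linearly independent functions, so the curve spans all of $\mathbb{R}^4$.

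Your remedy, however, is a change of hypothesis, not a reduction. $\Omega$ and $\Omega'=wu^T-uw^T$ agree on $u$, so they give the same instantaneous element of $\mathcal{S}_R$. Their flows differ as soon as $e^{\Omega t}u$ leaves $u$, because the stabilizer component then acts nontrivially. You should either state the restricted theorem openly or state what holds for every $\Omega$. In general, all agents trace $p_i(0)+\alpha_i\gamma(t)$ for one shared curve $\gamma$. That curve is a bounded (quasi-)periodic part in $\text{Range}(\Omega)$ plus a drift in $\text{Null}(\Omega)$. It is circle-plus-line in particular whenever $\text{rank}\,\Omega=2$.

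There is also a slip in Step~3. For your canonical $\Omega'$, $u\in\text{span}\{u,w\}=\text{Range}(\Omega')$, so $u^\parallel=0$ and $\gamma$ is a pure circle with no drift. Your phrase ``a circle in $\text{span}\{u,w\}$ plus a linear drift'' mixes this case with a general rank-two $\Omega$ in $d=3$. There the circle lies in $\text{Range}(\Omega)$, which is not $\text{span}\{u,w\}$ unless $u\perp\text{Null}(\Omega)$. Under your restriction, the linear-sweeping branch therefore occurs only in the trivial case $\Omega u=0$. Finally, when $\alpha_i=\alpha_j$ the pair does not sweep at all; its relative position is constant.
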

	\begin{proof}
		Let the initial velocities be linearly dependent: $v_i(0) = \alpha_i \mathbf{w}$, where $\alpha_i \in \mathbb{R}$ and $\mathbf{w} \in \mathbb{R}^d$ is a common unit vector. Under the rotational generator $\Omega \in \mathfrak{so}(d)$, the velocity trajectory is:
		\begin{equation}
			v_i(t) = e^{\Omega t} v_i(0) = \alpha_i \big(e^{\Omega t} \mathbf{w}\big).
		\end{equation}
		To determine the macroscopic position trajectories, we orthogonally decompose the unit vector as $\mathbf{w} = \mathbf{w}^\perp + \mathbf{w}^\parallel$, where $\mathbf{w}^\perp \in \text{Range}(\Omega)$ and $\mathbf{w}^\parallel \in \text{Null}(\Omega)$. Integrating the velocity yields:
		\begin{align}
			\hspace{-1mm}
			p_i(t) &= p_i(0) + \alpha_i \int_{0}^{t} e^{\Omega \tau} (\mathbf{w}^\perp + \mathbf{w}^\parallel) \text{d}\tau \nonumber \\
			&= p_i(0) + \alpha_i \underbrace{\Omega^\dagger (e^{\Omega t} - I_d) \mathbf{w}^\perp}_{\text{Planar oscillatory}} + \alpha_i \underbrace{\mathbf{w}^\parallel t}_{\text{Linear sweeping}}.
		\end{align}
		Let $\mathbf{s}(t) = \Omega^\dagger (e^{\Omega t} - I_d) \mathbf{w}^\perp + \mathbf{w}^\parallel t$ denote the shared 1-dimensional trajectory basis. Because the action of $\Omega$ restricts $e^{\Omega t} \mathbf{w}^\perp$ to a 2D invariant plane, its integral mathematically forms a bounded planar curve (oscillation), while the null space component introduces a constant linear sweeping drift. Consequently, the position of each agent is strictly confined to $p_i(t) = p_i(0) + \alpha_i \mathbf{s}(t)$. 
		
		The relative distance vector between any pair evolves as $p_i(t) - p_j(t) = (p_i(0) - p_j(0)) + (\alpha_i - \alpha_j) \mathbf{s}(t)$. This proves that the framework cannot sustain multidimensional rigid orbiting; instead, it undergoes a linear shearing deformation along the restricted planar oscillatory or sweeping paths. % \hfill $\blacksquare$
	\end{proof}
	% \begin{theorem}[\textbf{Degeneration of the Rotational Null Space}]
		%      If the network reaches a collinear velocity state, such that all $v_i(t)$ span a 1-dimensional subspace in $\mathbb{R}^d$ ($d \ge 2$), the rotational trivial motion subspace $\mathcal{S}_R$ degenerates, confining the macroscopic synchronized circular orbiting to planar oscillatory or linear sweeping motions.
		% \end{theorem}
	% \begin{proof}
		%     Let all velocities be linearly dependent: $v_i = \alpha_i \mathbf{w}$ for $\alpha_i \in \mathbb{R}$ and unit vector $\mathbf{w} \in \mathbb{R}^d$. For any skew-symmetric $\Omega \in \mathfrak{so}(d)$, the rotational generator is:
		%     \begin{equation}
			%         \delta v_i = \Omega (\alpha_i \mathbf{w}) = \alpha_i (\Omega \mathbf{w}).
			%     \end{equation}
		%     Because velocities lack multi-dimensional span, the Lie algebra action is restricted to the orthogonal complement of $\mathbf{w}$. Consequently, the full rank of the rotational null space cannot be exercised, mapping $e^{\Omega t} v_i(0)$ onto a lower-dimensional manifold. % \hfill $\blacksquare$
		% \end{proof}
	
	This subspace confinement is visually demonstrated by the shear motion in Fig. \ref{fig:shear}. Collinear velocity vectors collapse the velocity graph onto a 1-dimensional axis (Fig. \ref{fig:shear}(a)). This effectively degenerates the rotational degrees of freedom, enforcing a linear, shear-like deformation in the macroscopic position space (Fig. \ref{fig:shear}(b)).
	
	\begin{figure}[h!]
		\centering
		\hspace{-10mm}
		\includegraphics[width=\linewidth]{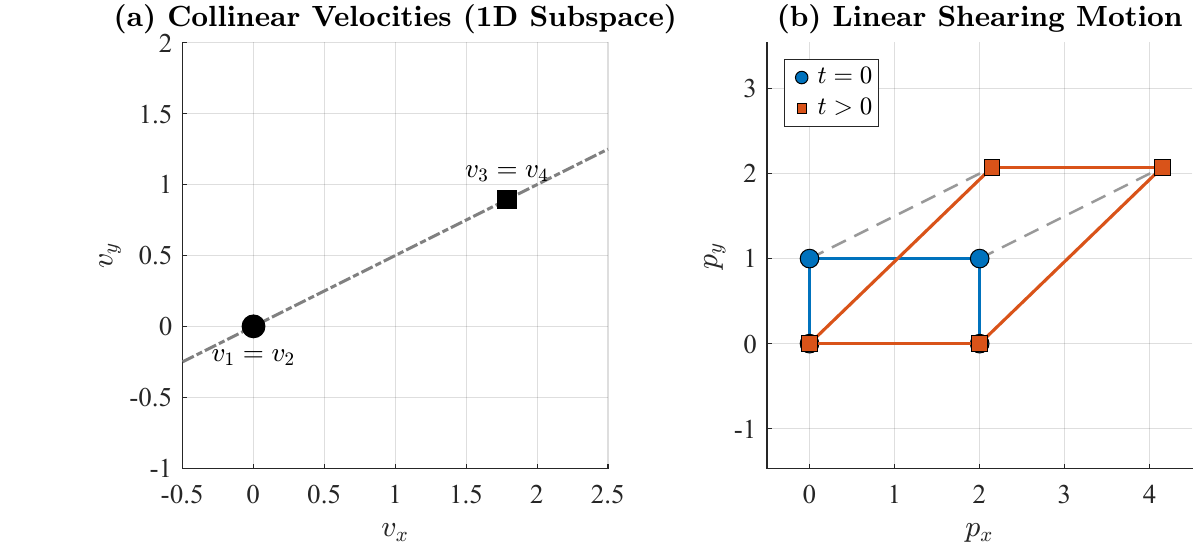}
		\vspace{-3mm}
		\caption{Illustrations of degenerate velocity configurations and subspace confinement. \textbf{(a)} In the velocity space, agents exhibit collinear velocities, representing a 1D subspace confinement where the graph collapses onto a single axis. \textbf{(b)} In the position space, this degeneration of rotational degrees of freedom mathematically enforces a linear, shear-like deformation, transforming the initial rectangular framework into a sheared parallelogram.}
		\label{fig:shear}
		\vspace{-4mm}
	\end{figure}
	%====================================================
	\section{NUMERICAL SIMULATIONS}
	\label{sec:simulations}
	
	To validate the exact kinematic mappings and demonstrate the extreme spatial flexibility endowed by generalized velocity rigidity, we simulate a continuous, multi-phase macroscopic maneuver using a planar network of $n=4$ agents. The network seamlessly alternates between degenerate velocity singularities and fully actuated Lie group manifolds, all while rigorously preserving the generalized velocity constraints. 

    %{\color[HTML]{2da44e} Throughout the simulations, we assume the framework maintains infinitesimal velocity rigidity at all times. Under this assumption, the constraint residual vanishes identically, so the differential inclusion of Lemma~\ref{lem-diff-inclusion} holds at every instant.}
	
	To ensure visual clarity and prevent trajectory overlapping, a continuous macroscopic translational drift is systematically integrated into the sequence. This elegantly unrolls the position-space trajectory horizontally, akin to a timeline (Fig. \ref{fig:multiphase}, Top). The corresponding instantaneous velocity-space frameworks, along with the trajectory of the velocity centroid $v_c(t) = \frac{1}{n}\sum v_i(t)$, are captured in the bottom panels (Fig. \ref{fig:multiphase}(a)-(d)). Tracking $v_c(t)$ explicitly visualizes the algebraic decoupling between the macroscopic translation and the internal structural morphing.
	
	\begin{figure*}[ht!]
		\centering
		% \hspace{-20mm}
		\includegraphics[width=1\linewidth]{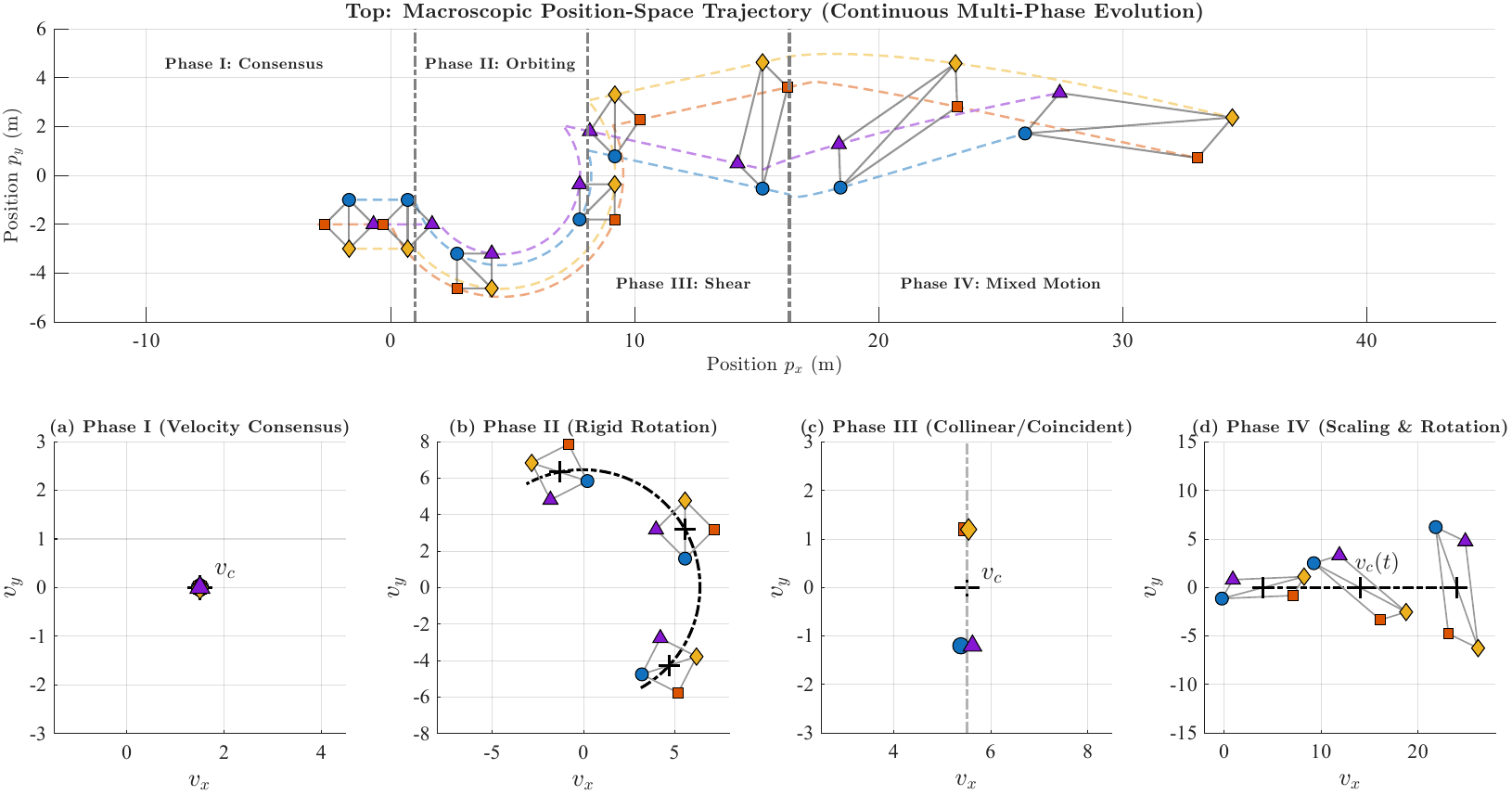}
		\vspace{-8mm}
		\caption{Continuous multi-phase simulation of the generalized velocity-rigid network. \textbf{Top:} Unrolled macroscopic trajectory in the position space. The network transitions through distinct kinematic modes, illustrating extreme topological morphing capabilities. \textbf{Bottom (a)-(d):} Corresponding instantaneous states in the velocity space. All subplots maintain identical aspect ratios. The black cross ($+$) and dash-dotted lines track the instantaneous velocity centroid $v_c(t)$. (a) Phase I exhibits a degenerate consensus point. (b) Phase II displays a rotating rigid body, where $v_c(t)$ traces a circular arc. (c) Phase III highlights subspace confinement, collapsing the graph onto a 1D line to induce spatial shearing. (d) Phase IV exhibits a fully actuated, expanding, and rotating framework superimposed on a translational drift.}
		\label{fig:multiphase}
		\vspace{-5mm}
	\end{figure*}

	\textbf{Phase I: Velocity Consensus (Rigid Translation) ($t \in [0, 2)$).} 
	The network initiates with a global velocity consensus ($v_i = v_j$). As analyzed in Section III.A, the velocity framework degenerates into a single zero-dimensional point, with the centroid $v_c$ remaining static (Fig. \ref{fig:multiphase}(a)). In the position space, this topological contraction naturally maps to a cohesive, rigid spatial translation, smoothly transporting the diamond framework forward.
	
	\textbf{Phase II: Concentric Orbiting (Rigid Rotation) ($t \in [2, 4)$).} 
	The network seamlessly switches to the rotational null space $\mathcal{S}_R$. By strictly coupling velocities to positions via $v_i = \Omega(p_i - c)$, the velocity framework transforms into a rotating rigid body. Notably, as the position-space formation traces a concentric orbit around the local center $c$, its velocity centroid $v_c(t)$ autonomously traces a perfect circular arc around the origin (Fig. \ref{fig:multiphase}(b)).
	
	\textbf{Phase III: Shear Motion via Collinearity ($t \in [4, 5.5)$).} 
	To navigate a hypothetical constrained environment, the network exploits a velocity-space singularity. Velocities are commanded to a 1D collinear subspace with coincident nodes, superposed with a common horizontal drift $c(t)$. This collapses the velocity graph into a discrete, vertically aligned line segment shifting horizontally (Fig. \ref{fig:multiphase}(c)). This specific degeneracy forces the position-space framework to stretch vertically while flying forward, seamlessly elongating into a narrow geometry.
	
	\textbf{Phase IV: Mixed Motion (Scaling \& Rotation) ($t \in [5.5, 6.5]$).} 
	In the final phase, the acceleration profile orchestrates a superposition of scaling, rotation, and an accelerating horizontal translation. Driven by the composite generator $v_i(t) = s_0(p_i - c_m(t)) + \Omega_m(p_i - c_m(t)) + v_T(t)$, the velocity framework simultaneously expands and rotates around its continuously shifting centroid $v_c(t)$ (Fig. \ref{fig:multiphase}(d)). By dynamically synchronizing the geometric center $c_m(t)$ with the collective drift, the position-space network undergoes a continuous spatial homothety and rotational sweep, manifesting as a forward-moving macroscopic vortex. 
	
	Throughout all four phases—despite undergoing severe spatial shearing, orbital flipping, and exponential dilation—the fundamental algebraic properties of the velocity graph remain flawlessly intact. This comprehensive simulation forcefully corroborates the core paradigm: generalized velocity rigidity securely locks macroscopic continuous dynamics while empowering the swarm with unparalleled spatial shape flexibility.

	\section{CONCLUSION}
	\label{sec:conclusion}
	This paper developed a generalized velocity-rigidity framework that extends graph rigidity from the classical position space to the continuous-time velocity space. Rather than constraining a formation to maintain a prescribed spatial shape, the proposed framework characterizes how relative velocity constraints can coordinate coherent network-level motion. By analyzing the null space of the generalized velocity rigidity matrix, we showed that its trivial motions correspond to several fundamental collective behaviors, including synchronized translation, rotational motion associated with $SO(d)$, and exponential homothetic scaling.

    A key feature of generalized velocity rigidity is that it constrains instantaneous kinematic relations while allowing substantial flexibility in the corresponding position configuration. Velocity-degenerate configurations, such as consensus and collinearity, can therefore serve as transitional modes for coordinated reconfiguration. The simulations demonstrated shearing, stretching, contraction, and passage through narrow regions while preserving the prescribed velocity-level constraints.

	This flexibility, however, also exposes several limitations of the present formulation. Because the position configuration is regulated only indirectly through velocity-level constraints, generalized velocity rigidity alone does not guarantee bounded formation size or convergence to a prescribed spatial geometry. Consequently, the inter-agent distances or overall formation diameter may grow beyond desirable limits unless additional position- or distance-dependent constraints are introduced. Moreover, collision avoidance is not explicitly encoded in the generalized velocity-rigidity constraints. Additional safety mechanisms are therefore required when the agents operate in dense, cluttered, or dynamically constrained environments.
    
    %{\color[HTML]{2da44e}Nonetheless, this flexibility also reveals an important limitation. In the proposed framework, the formation position is regulated only indirectly through velocity-level constraints, so the resulting spatial extent may grow without bounds unless additional regulation is imposed. Moreover, collision avoidance between agents is not explicitly considered by the velocity rigidity, which means that safety constraints must be incorporated separately when the swarm operates in dense or cluttered environments.}
	
	Future work will investigate velocity-rigidity-based motion planning and distributed control architectures, including directed leader--follower schemes. Extensions to safety-constrained coordination, robustness, and higher-order or nonholonomic agent dynamics will also be considered.
    
    %The framework developed in this paper also opens several directions for future research. One direction is motion and path planning, where generalized velocity-rigid motions and their singular configurations may serve as structured motion primitives for coordinated navigation, large-scale reconfiguration, and passage through constrained environments. A second direction is the development of distributed control architectures, including leader--follower and directed-interaction schemes, for regulating or switching among desired velocity-rigid motion patterns using only local relative information. Further extensions may incorporate geometric and safety constraints, such as formation-diameter bounds, inter-agent collision avoidance, and obstacle avoidance, together with robustness against disturbances, sensing uncertainty, and model mismatch. Extending the theory to higher-order agent dynamics and nonholonomic systems would further clarify the role of generalized velocity rigidity in practical multi-agent motion coordination.
	
	\bibliographystyle{IEEEtran}
	\bibliography{References}
	
\end{document}